\let\doendproof\endproof
\renewcommand\endproof{~\hfill\qed\doendproof}
\title{Equilateral L-Contact Graphs}
\author{Steven Chaplick\inst{1}\thanks{Supported by NSERC, and ESF GraDR EUROGIGA grant as project GACR GIG/11/E023.}
\and 
Stephen G.~Kobourov\inst{2}\thanks{Research funded in part by NSF grants CCF-1115971 and DEB-1053573.}
\and 
Torsten Ueckerdt\inst{3}
}
\institute{Dept.~of Applied Mathematics, Charles University, Prague, Czech Republic
\and Dept.~of Computer Science, University of Arizona, Tucson AZ, USA \and Dept.~of Mathematics, Karlsruhe Istitute of Technology, Karlsruhe, Germany}
\begin{document}

\date{}

 \maketitle

\begin{abstract}
 We consider {\em L-graphs}, that is contact graphs of axis-aligned L-shapes in the plane, all with the same rotation. We provide several characterizations of L-graphs, drawing connections to Schnyder realizers and canonical orders of maximally planar graphs. We show that every contact system of L's can always be converted to an equivalent one with equilateral L's. This can be used to show a stronger version of a result of Thomassen, namely, that every planar graph can be represented as a contact system of square-based cuboids.
 
We also study a slightly more restricted version of equilateral L-contact systems and show that these are equivalent to homothetic triangle contact representations of maximally planar graphs. We believe that this new interpretation of the problem might allow for efficient algorithms to find homothetic triangle contact representations, that do not use Schramm's monster packing theorem.
\end{abstract}

\section{Introduction}\label{sec:introduction}
A \emph{contact graph} is a graph whose vertices are represented by geometric objects (such as curves, line segments, or polygons), and edges correspond to two objects touching in some specified fashion. There is a large body of work about representing planar graphs as contact graphs. An early result is Koebe's 1936 theorem~\cite{Koebe36} that all planar graphs can be represented by touching disks.

In 1990 Schnyder showed that maximally planar graphs contain rich combinatorial structure~\cite{s-epgg-90}. With an angle labeling and a corresponding edge labeling, Schnyder shows that maximally planar graphs can be decomposed into three edge disjoint spanning trees. This combinatorial structure, called Schnyder realizer, can be transformed into a geometric structure to produce a straight-line crossing-free planar drawing of the graph with vertex coordinates on the integer grid. While Schnyder realizers were defined for maximally planar graphs~\cite{s-epgg-90}, the notion generalizes to $3$-connected planar graphs~\cite{f-lspg-04}. Fusy's transversal structures~\cite{Fusy09} for irreducible triangulations of the 4-gon also provide combinatorial structure that can be used to obtain geometric results. Later, de~Fraysseix {\em et al.}~\cite{FraysseixTContact} show how to use Schnyder realizer to produce a representation of planar graphs as T-contact graphs (vertices are axis-aligned T's and edges correspond to point contact between T's) and triangle contact graphs.

Recently, a similar combinatorial structure, called \emph{edge labeling}, was identified for the class of planar Laman graphs, and  this was used to produce a representation of such graphs as L-contact graphs, with L-shapes in all four rotations~\cite{full}.
 Planar Laman graphs contain several large classes of planar graphs (e.g., series-parallel graphs, outer-planar graphs, planar 2-trees) and are also of interest in structural mechanics, chemistry and physics, due to their connection to rigidity theorys~\cite{hors+-pmrgpt-05}. This dates back to the 1970's~\cite{Laman}. A system of fixed-length bars and flexible joints connecting them is minimally rigid if it becomes flexible once any bar is removed; planar Laman graphs correspond to rigid planar bar-and-joint systems~\cite{hors+-pmrgpt-05}.

Planar bipartite graphs can be represented by axis-aligned segment
contacts~\cite{CzyzowiczKU98,fop-rpgs-91,rt-rplbopg-86}.
Triangle-free planar graphs can be represented via
contacts of segments with only three slopes~\cite{CastroCDMN02}. They can also be represented by contact axis-aligned line segments, L-shapes, and $\Gamma$-shapes~\cite{ChaplickU12}.
Furthermore, every $4$-connected $3$-colorable planar graph and every $4$-colored planar graph without an induced $C_4$ using four colors can be represented as the contact graph of segments~\cite{fo-rcis-07}. More generally, planar Laman graphs can be represented with contacts of segments with arbitrary number of slopes and every contact graph of segments is a subgraph of a planar Laman graph~\cite{A+11}.

Planar graphs have also been considered as intersection graphs of geometric objects. One major result is the proof of Scheinerman's conjecture that all planar graphs are intersection graphs of line segments in the plane~\cite{Chalopin:2009}. Recently the \emph{$k$-bend Vertex intersection graphs of Paths in Grids ($B_k$-VPG)}were introduced and it was shown that planar graphs are $B_3$-VPG~\cite{Asinowski2012}. It was recently shown that planar graphs are $B_2$-VPG~\cite{ChaplickU12}, where the authors also conjectured that all planar graphs are a intersection graphs of one fixed rotation of axis-aligned L-shapes (a special case of $B_1$-VPG). 

In the 3D case Thomassen~\cite{Thomassen86} shows that any planar graph has a proper contact representation by touching cuboids (axis-alligned boxes). Felsner and Francis~\cite{Felsner11} show that any planar graph has a (not necessarily proper) representation by touching cubes. In a \emph{proper contact representation of cuboids} contacts must always have non-zero area and \emph{cubes} are special cuboids where all sides have the same length.
Recently Bremner {\em et al.}~\cite{cubes12} showed that deciding whether a graph admits a proper contact representation by unit cubes is NP-Complete. They also show that with cubes of varying sizes one can find proper contact representation for some planar graph classes such as partial planar $3$-trees. Finally, they describe two new proofs of Thomassen's result: one based on canonical orders~\cite{fpp-hdpgg-90} of de Fraysseix, Pach and Pollack~\cite{fpp-hdpgg-90} and the other based on Schnyder's realizers~\cite{s-epgg-90}. 

\medskip
\noindent{\bf Our Contributions:}
In this paper we consider contact graphs of L-shapes in only one fixed rotation, so-called L-graphs.
In Section~\ref{sec:preliminaries} we briefly review Schnyder realizers, T-contact representations, triangle contact representations, and canonical orders.
In Section~\ref{sec:characterization} we characterize L-graphs in terms of canonical orders, Schnyder realizers, and edge labelings.
We also show how to recognize L-graphs in quadratic time.
In Section~\ref{sec:equilateral} we show that every L-representation has an equivalent one with only equilateral L-shapes. Using this we strengthen the result of Thomassen~\cite{Thomassen86} and Bremner {\em et al.}~\cite{cubes12}, by showing that every planar graph admits a proper contact representation with square-based cuboids.
Finally, we consider a special class of equilateral L-representations, drawing connections to homothetic triangle contact representations of maximally planar graphs and contact representations with cubes.
The question whether every planar graph has a proper contact representation by cubes remains tantalizingly open.

\section{Preliminaries}\label{sec:preliminaries}

Schnyder realizers for maximally planar graph were originally described in 1990~\cite{s-epgg-90} and have played a central role in numerous problems for planar graphs.

\begin{definition}[\cite{s-epgg-90}]
 Let $G =(V,E)$ be a maximally planar graph with a fixed plane embedding. Let $v_1,v_2,v_n$ be the outer vertices in clockwise order. A \emph{Schnyder realizer of $G$} is an orientation and coloring of the inner edges of $G$ with colors $1$ (red), $2$ (blue) and $n$ (green), such that:
 \begin{enumerate}[label =(\roman*)]
  \item Around every inner vertex $v$ in clockwise order there is one outgoing red edge, a possibly empty set of incoming green edges, one outgoing blue edge, a possibly empty set of incoming red edges, one outgoing green edge, a possibly empty set of incoming blue edges.
  \item All inner edges at outer vertices are incoming and edges at $v_1$ are colored red, edges at $v_2$ are colored blue, edges at $v_n$ are colored green.
 \end{enumerate}
\end{definition}

Schnyder realizers have several useful properties; see Fig.~\ref{fig:Schnyder}. For example, if $S_1$, $S_2$ and $S_n$ are the sets of red, blue and green edges, then for $i=1,2,n$ we have that $S_i$ is a directed tree spanning all inner vertices plus $v_i$, where each edge is oriented towards $v_i$. This way the orientation of edges can be recovered from their coloring and hence we denote a Schnyder realizer simply by the triple $(S_1,S_2,S_n)$.
For $i=1,2,n$ let $S_i^{-1}$ be the set $S_i$ with the orientation of every edge reversed. It is well-known that for every Schnyder realizer $S_1 \cup S_2 \cup S_n^{-1}$ is an acyclic set of directed edges.

\begin{figure}[t!]
 \centering
 \subfigure[]{
  \includegraphics{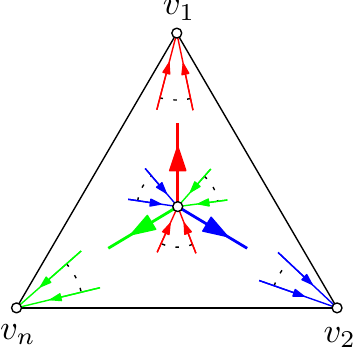}
  \label{fig:Schnyder-rules}
 }
 \subfigure[]{
  \includegraphics{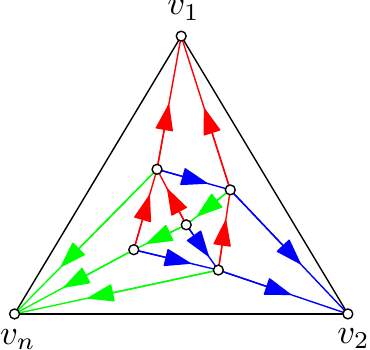}
  \label{fig:Schnyder-example}
 }
 \subfigure[]{
  \includegraphics{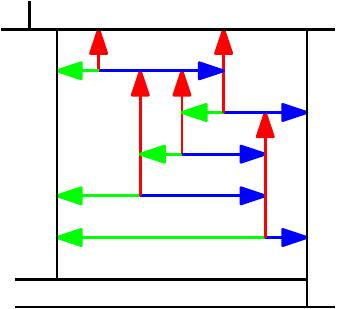}
  \label{fig:Schnyder-Ts}
 }
 \subfigure[]{
  \includegraphics{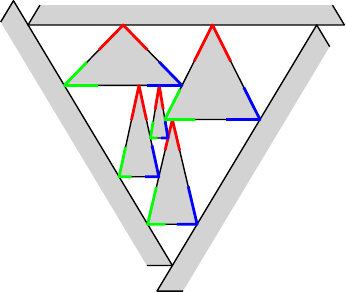}
  \label{fig:Schnyder-triangles}
 }
 \subfigure[]{
  \includegraphics{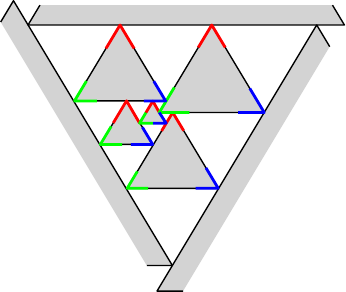}
  \label{fig:Schnyder-homothetics}
 }
 \subfigure[]{
  \includegraphics{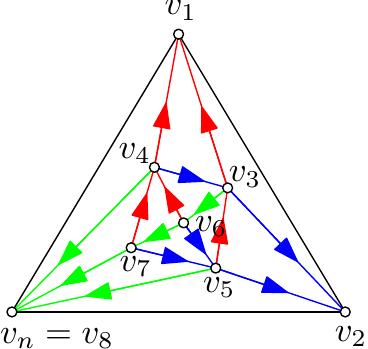}
  \label{fig:canonical-example}
 }
 \caption{\subref{fig:Schnyder-rules} The Schnyder rules for inner and outer vertices. \subref{fig:Schnyder-example} A maximally planar graph $G$ with a Schnyder realizer $(S_1,S_2,S_n)$. \subref{fig:Schnyder-Ts} A T-contact representation of $G$ w.r.t. $(S_1,S_2,S_n)$. \subref{fig:Schnyder-triangles} A triangle contact representation of $G$ w.r.t. $(S_1,S_2,S_n)$. \subref{fig:Schnyder-homothetics} A homothetic triangle representation of $G$ w.r.t. $(S_1,S_2,S_n)$. \subref{fig:canonical-example} A canonical order of $G$ w.r.t. $S_1,S_2$.}
 \label{fig:Schnyder}
\end{figure}

Schnyder realizers are often used show that planar graphs admit certain contact representations. In a \emph{T-contact representation} of a maximally planar graph $G = (V,E)$ the vertices are assigned to interior disjoint axis-aligned upside down T-shapes, so that two T-shapes touch in a point if and only if the corresponding vertices are joined by an edge in $G$. For a vertex $v \in V$ let $T_v$ be the corresponding T-shape. From every T-contact representation we get a Schnyder realizer by coloring an edge $uv$ red (respectively blue and green) if the top (respectively left and right) endpoint of $T_u$ is contained in $T_v$; see Fig.~\ref{fig:Schnyder-Ts}.

Similarly to T-contact representations, de Fraysseix \textit{et al.}~\cite{FraysseixTContact} consider triangle contact representations. In a \emph{triangle contact representation} of a maximally planar graph $G = (V,E)$ the vertices are assigned to interior disjoint triangles, so that two triangles touch in a point if and only if the corresponding vertices are joined by an edge in $G$. We can indeed assume w.l.o.g. all triangles are isosceles with horizontal bases and the tip above. For a vertex $v \in V$ let $\Delta_v$ be the corresponding triangle. We again get a Schnyder realizer by coloring an edge $uv$ red (respectively blue and green) if the top (respectively left and right) corner of $\Delta_u$ is contained in $\Delta_v$; see Fig.~\ref{fig:Schnyder-triangles}.

\begin{theorem}[\cite{FraysseixTContact}]
 Let $G$ be a maximally planar graph with a fixed embedding. Then:
 \begin{itemize}
  \item Every T-contact representation defines a Schnyder realizer and vice versa.
  \item Every triangle contact representation defines a Schnyder realizer and vice versa.
 \end{itemize} 
\end{theorem}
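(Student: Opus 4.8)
Both bullets split into two directions, and I treat them together since for this purpose a $\bot$-shape and an isosceles tip-up triangle behave the same way.

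First, \emph{every representation defines a realizer}. The coloring rule for an edge $uv$ is already in the text, so what remains is to see it is well defined and satisfies Schnyder's rules (i) and (ii). Well-definedness: the shapes are interior-disjoint and meet only in points, hence a contact of $T_u$ and $T_v$ is a point on the boundary of both; a generic infinitesimal perturbation of the representation (which destroys no contact) moves it onto exactly one of the three distinguished endpoints --- the tip, or one of the two ends of the foot --- of exactly one of the two shapes, and this choice gives $uv$ a unique color together with an orientation, from the shape supplying the endpoint towards the shape supplying the side. Rule (ii) is read off the outer boundary, which may be normalized so that $v_1,v_2,v_n$ occupy the evident extreme positions. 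For rule (i) I would walk clockwise around a fixed inner vertex $v$ and list its incident edges from the local picture of $T_v$: the edges realized at the tip, at the left foot-end and at the right foot-end of $T_v$ are its unique outgoing red, blue and green edges and appear in this cyclic order, while between two consecutive outgoing edges one meets exactly the incoming edges of a single color, namely those whose partner shape touches the side of $T_v$ lying in between. This is a finite inspection of a $\bot$ (and, identically, of a triangle).

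Second, and substantively, \emph{every realizer is induced by some representation}. Given $(S_1,S_2,S_n)$ I would place the shapes using monotone coordinates read off the realizer --- essentially the data underlying Schnyder's straight-line embedding --- which exist because $S_1\cup S_2\cup S_n^{-1}$, together with its two color-symmetric variants, is acyclic and hence yields linear orders of $V$. These orders determine the $y$-level of each foot, the $x$-coordinate of each vertical bar, the height of each tip (forced to equal the $y$-level of the foot of the red out-neighbor), and the two $x$-coordinates at the ends of each foot (forced to equal the $x$-coordinates of the vertical bars of the blue and green out-neighbors); $v_1,v_2,v_n$ are handled by a fixed outer frame. With these assignments each edge $uv\in E$ produces the intended single-point contact, carrying exactly the endpoint and color that the realizer prescribes, so applying the first map to the result returns $(S_1,S_2,S_n)$. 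The triangle version runs on the same data, with the apex of $\Delta_v$ pinned to the midpoint of its base --- one degree of freedom less than a $\bot$, but still enough.

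The crux, and what I expect to be the main obstacle, is global consistency: Schnyder's rules are local and so do not themselves forbid two shapes with $uv\notin E$ from overlapping or touching, nor an intended contact from degenerating to more than a point. I would establish this by induction along one of the linear orders above --- equivalently, along a canonical order of $G$ compatible with the realizer, as set up in Section~\ref{sec:characterization} --- with the invariant that after the first $k$ shapes are placed the plane decomposes into those $k$ shapes, one ``gap'' region per already-completed internal face, and one region beneath the current outer path; the three outgoing edges of the next vertex point to three previously placed vertices that jointly bound exactly one such gap, into which its shape drops, meeting on the current outer path precisely its already-placed neighbors. Planarity of the fixed embedding is exactly what keeps the gaps simply connected and rules out accidental incidences, and for triangles the only additional point is to check that the slightly more rigid triangle still fits inside its gap.
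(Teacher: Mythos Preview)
The paper does not prove this theorem at all: it is stated as a citation of de~Fraysseix \textit{et al.}~\cite{FraysseixTContact} and is used only as background in Section~\ref{sec:preliminaries}. So there is no ``paper's own proof'' to compare against; your write-up is a stand-alone attempt at the original result.

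On its own merits, the first direction (representation $\Rightarrow$ realizer) is fine as a sketch. The second direction, however, contains a genuine slip in the inductive step. You write that ``the three outgoing edges of the next vertex point to three previously placed vertices that jointly bound exactly one such gap, into which its shape drops.'' In a canonical order compatible with $(S_1,S_2,S_n)$ --- a topological order of $S_1\cup S_2\cup S_n^{-1}$ --- only \emph{two} of the three outgoing edges of $v_{i+1}$, namely those in $S_1$ and $S_2$, go to vertices that are already placed; the green out-neighbor $\sigma_n(v_{i+1})$ comes \emph{later}. Consequently the new shape is not dropped into a bounded ``gap'' of an already-completed internal face; it is attached along the current outer contour, where it makes its red and blue outgoing contacts with $T_{\sigma_1(v_{i+1})}$ and $T_{\sigma_2(v_{i+1})}$ and simultaneously absorbs the dangling right (green) endpoints of all earlier $w$ with $\sigma_n(w)=v_{i+1}$. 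The invariant you need is therefore about the shape of the outer contour (a staircase, with the right endpoints of the not-yet-green-satisfied shapes exposed on it), not about interior gaps. Once this is corrected the induction goes through, and indeed this outer-contour viewpoint is exactly what the paper exploits for L-shapes in Lemma~\ref{lem:L-is-2-canonical} and Theorem~\ref{th:equiL}.
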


A \emph{homothetic triangle representation} is a triangle contact representation in which all triangles are homothetics. It has been noticed by Gon{\c{c}}alves, L{\'e}v{\^e}que and Pinlou~\cite{gonccalves2012triangle}, that a result of Schramm~\cite{schramm2007combinatorically} implies the following.

\begin{theorem}[\cite{gonccalves2012triangle}]\label{thm:homothetic-triangles}
 Every $4$-connected maximally planar graph admits a homothetic triangle representation.
\end{theorem}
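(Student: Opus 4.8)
The plan is to deduce the statement directly from Schramm's monster packing theorem \cite{schramm2007combinatorically}, as was observed in \cite{gonccalves2012triangle}; this is the analogue, for prescribed prototiles, of the Koebe--Andreev--Thurston circle packing theorem \cite{Koebe36}. Schramm's theorem asserts that for any triangulation of the sphere and any assignment of a convex prototile to each vertex there is a packing that realizes the prescribed combinatorics and in which every tile is a positive homothet of its prototile, with the sole caveat that some tiles may degenerate to single points. Accordingly, I would fix $G$ together with a plane embedding and outer face $v_1 v_2 v_n$, and apply Schramm's theorem to $G$, designating that face as outer and choosing every prototile to be one and the same triangle $\Delta$ (say the equilateral triangle with horizontal base and apex on top). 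This yields closed regions $\Delta_v$, $v \in V(G)$, with pairwise disjoint interiors, each a positive homothet of $\Delta$ or else a single point, such that the tiles of two vertices meet precisely when the vertices are adjacent in $G$, with matching rotation systems. As $G$ is a triangulation, the bounded regions of the complement are downward triangular holes, one per inner face (cf.\ Fig.~\ref{fig:Schnyder-homothetics}); so, once degeneracies have been excluded, this is exactly a triangle contact representation of $G$ in which all triangles are homothetic to $\Delta$, i.e.\ a homothetic triangle representation.

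It then remains to rule out degenerate tiles, and this is the only point at which $4$-connectedness is used --- via the trivial inequality $\delta(G) \ge \kappa(G) = 4$. Suppose some $\Delta_v$ were a single point $p$. Every neighbour $u$ of $v$ satisfies $p \in \Delta_v \cap \Delta_u$, hence $p \in \Delta_u$. Since $\deg(v) \ge 4$ and $G$ is planar, $N(v)$ cannot induce a clique (otherwise $K_5 \subseteq G$), so there are $u, u' \in N(v)$ with $uu' \notin E(G)$; but then $p \in \Delta_u \cap \Delta_{u'}$, contradicting that in the packing the tiles of non-adjacent vertices are disjoint. Hence no tile degenerates, and the packing produced above is the desired homothetic triangle contact representation of $G$.

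The genuine difficulty lies not in these routine steps but in the fact that the whole argument rests on Schramm's existence theorem, whose proof is a non-constructive topological (degree-theoretic) fixed-point argument: there is no known elementary or algorithmic route to homothetic triangle representations, and obtaining one by the variational/flow methods that succeed for circle packings is precisely the open problem motivating the combinatorial characterizations elsewhere in this paper. A secondary technical point to handle is that $\Delta$ is a convex polygon rather than a smooth strictly convex body; one therefore either invokes a version of Schramm's theorem valid for convex prototiles (which is what permits the degeneracies dealt with above) or approximates $\Delta$ by smooth strictly convex bodies $K_\varepsilon \to \Delta$, applies the smooth case of the theorem --- which returns genuine, degeneracy-free packings --- normalizes the packings so that, say, $\Delta_{v_1}$ is held fixed, and passes to a convergent subsequence; the limiting tiles are homothets of $\Delta$ or points, and the degeneracy lemma of the second paragraph applies verbatim to finish. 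One may also record that, $\Delta$ and the outer face being fixed, the representation is unique up to similarity.
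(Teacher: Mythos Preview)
The paper does not give its own proof of Theorem~\ref{thm:homothetic-triangles}; it merely quotes it as an observation of Gon{\c{c}}alves, L{\'e}v{\^e}que and Pinlou derived from Schramm's monster packing theorem, and later reiterates that ``the only known proof of Theorem~\ref{thm:homothetic-triangles} relies on Schramm's result.'' Your proposal is precisely that derivation---apply Schramm with a single triangular prototile and use $4$-connectedness (via $\delta(G)\ge 4$ and the absence of $K_5$) to rule out degenerate tiles---so it matches the paper's intended route, just with the details of the degeneracy argument written out.
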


Canonical orders were first introduced by De Fraysseix, Pach and Pollack in 1990~\cite{fpp-hdpgg-90}. For maximally planar graphs Schnyder realizers and canonical orders are very closely related, as shown in Lemma~\ref{lem:Schnyder-canonical} below.

\begin{definition}[\cite{fpp-hdpgg-90}]\label{def:canonical-order}
 Let $G = (V,E)$ be a biconnected planar graph with a fixed embedding and some distinguished outer edge $v_1v_2$. A \emph{canonical order of $G$} is a permutation $(v_1,v_2,v_3,\ldots,v_n)$ of the vertices of $G$, such that:
 \begin{enumerate}[label =(\roman*)]
  \item For each $i \geq 3$ the induced subgraph $G_i$ of $G$ on $\{v_1,\ldots,v_i\}$ is biconnected, and the boundary of its outer face is a cycle $C_i$ containing the edge $v_1v_2$.
  \item For each $i \geq 4$ the vertex $v_i$ lies in the outer face of $G_{i-1}$, and its neighbors in $G_{i-1}$ form a subpath of $C_i \setminus v_1v_2$.
 \end{enumerate}
 The outer edge $v_1v_2$ of $G$ is then called the \emph{base edge of the canonical order}.
\end{definition}

\begin{lemma}\label{lem:Schnyder-canonical}
 If $G$ is a maximally planar graph with Schnyder realizer $(S_1,S_2,S_n)$, then every topological ordering of $S_1 \cup S_2 \cup S_n^{-1}$ defines a canonical order of $G$. Moreover, every canonical order of $G$ is a topological order of $S_1 \cup S_2 \cup S_n^{-1}$ for some Schnyder realizer $(S_1,S_2,S_n)$.
\end{lemma}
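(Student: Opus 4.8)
The plan is to prove the two directions separately; in both, the order-theoretic part is easy and the real work is reconciling the linear order with the planar rotation system through the Schnyder rules. \emph{From a realizer to a canonical order.} Fix a Schnyder realizer $(S_1,S_2,S_n)$, put $D=S_1\cup S_2\cup S_n^{-1}$ (acyclic), and write $p_1(v),p_2(v),p_n(v)$ and $C_1(v),C_2(v),C_n(v)$ for the red/blue/green parent and the sets of red/blue/green children of an inner vertex $v$. Only green edges are reversed in $D$, so the out-neighbours of $v$ in $D$ are $\{p_1(v),p_2(v)\}\cup C_n(v)$ and the in-neighbours are $\{p_n(v)\}\cup C_1(v)\cup C_2(v)$; hence $v_1,v_2$ are the only sinks of $D$ and $v_n$ its only source, so any topological order $\pi=(v_1,\dots,v_n)$ of $D$ lists $v_1,v_2$ first and $v_n$ last. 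I would then prove by induction on $i$ that $G_i$ is biconnected and internally triangulated, its outer cycle $C_i$ contains $v_1v_2$, and every vertex of $\{v_{i+1},\dots,v_n\}$ lies in the outer face of $G_i$; condition (ii) of Definition~\ref{def:canonical-order} falls out of the inductive step. The base case $i=3$ uses that no red edge meets $v_2$ and no green edge meets $v_1$, forcing $p_1(v_3)=v_1$, $p_2(v_3)=v_2$, $C_n(v_3)=\emptyset$, so $G_3$ is the triangle $v_1v_2v_3$, a face of $G$. For the step, since $V_{i-1}$ is a down-set of $D$, the neighbours of $v_i$ in $G_{i-1}$ are exactly its $D$-out-neighbours $\{p_1(v_i)\}\cup C_n(v_i)\cup\{p_2(v_i)\}$; the cyclic Schnyder rule puts these in a contiguous arc of the rotation at $v_i$ (out-red, then the block of in-green edges, then out-blue), and since $G$ is a triangulation each consecutive pair on this arc spans a triangular face of $G$ with $v_i$. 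So the arc is a path $P$ in $G$, each of these little triangles is a face of $G$, and — since $v_i$ is in the outer face of $G_{i-1}$ by induction — the vertices of $P$ lie on $C_{i-1}$ and, edge by edge, $P$ is a subpath of $C_{i-1}$. Inserting $v_i$ along $P$ fills in exactly these faces-of-$G$, so $G_i$ is again internally triangulated, its outer cycle is $C_{i-1}$ with the interior of $P$ replaced by $v_i$, and no $v_j$ with $j>i$ is enclosed. Finally $v_1$ (resp.\ $v_2$) can appear on $P$ only as its endpoint $p_1(v_i)$ (resp.\ $p_2(v_i)$), since its only inner edges are in-red (resp.\ in-blue); hence $v_1v_2$ never lies interior to $P$ and survives on $C_i$. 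This yields (i) and (ii), so $\pi$ is a canonical order with base edge $v_1v_2$.

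\emph{From a canonical order to a realizer.} Given a canonical order $\pi=(v_1,\dots,v_n)$ with base edge $v_1v_2$ (so $v_n$ is the third outer vertex), I would colour and orient the inner edges step by step: when $v_i$ with $3\le i\le n-1$ is processed, with $w_1,\dots,w_k$ its neighbours in $G_{i-1}$ listed along the outer boundary of $G_{i-1}$ from the $v_1$-side, colour $v_iw_1$ red and $v_iw_k$ blue (both directed out of $v_i$) and colour $v_iw_2,\dots,v_iw_{k-1}$ green (directed into $v_i$); at step $n$ the neighbours of $v_n$ form the whole outer boundary with $v_1,v_2$ as endpoints, so leave the outer edges $v_nv_1,v_nv_2$ uncoloured and colour the rest green into $v_n$. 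Each inner edge $v_iv_j$ with $i<j$ is coloured once, at step $j$, and one checks that afterwards every inner vertex has exactly one out-edge of each colour: its out-red and out-blue edges are assigned when it is processed, and its out-green edge is the unique $v_iv_j$, $j>i$, for which $v_i$ is interior to $v_j$'s neighbour path — equivalently, the step at which $v_i$ leaves the outer boundary, which happens exactly once for each inner $v_i\neq v_n$. The outer-vertex condition is immediate: $v_1$, always the $v_1$-side endpoint of each neighbour path to which it belongs, receives only in-red inner edges, $v_2$ only in-blue, and $v_n$ only in-green. For the cyclic condition at an inner vertex $v_i$, reading its rotation clockwise from the out-red edge $v_iw_1$ one meets, in turn: the in-red edges from later vertices using $v_i$ as their $v_1$-side endpoint, the out-green edge, the in-blue edges from later vertices using $v_i$ as their $v_2$-side endpoint, the out-blue edge $v_iw_k$, and the in-green edges $v_iw_{k-1},\dots,v_iw_2$ — exactly the Schnyder pattern. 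Thus $(S_1,S_2,S_n)$ is a Schnyder realizer, and $\pi$ is a topological order of $D=S_1\cup S_2\cup S_n^{-1}$ because, by construction, every $D$-out-edge at $v_i$ goes to an earlier vertex and every $D$-in-edge to a later one.

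I expect the main obstacle, in both directions, to be exactly this interface between the linear order and the rotation system. In the first direction it is the claim that the vertices the order/realizer singles out as the ``early'' neighbours of $v_i$, namely $\{p_1(v_i)\}\cup C_n(v_i)\cup\{p_2(v_i)\}$, actually form a subpath of the current outer cycle $C_{i-1}$ and that attaching $v_i$ encloses none of $v_{i+1},\dots,v_n$; both invoke planarity of $G$, not just acyclicity of $D$. In the converse, it is checking that the colour classes, assembled one vertex at a time, glue into the required cyclic pattern at every vertex, i.e.\ that in the rotation one finds the in-red block, then the out-green edge, then the in-blue block. The remaining bookkeeping — acyclicity of $D$, identifying its extreme vertices, and the fact that every prefix of $\pi$ is a down-set of $D$ — is routine by comparison.
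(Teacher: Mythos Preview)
The paper does not actually prove this lemma; it is stated without proof as a well-known structural fact linking Schnyder realizers and canonical orders (the result is folklore, essentially going back to de~Fraysseix--Pach--Pollack and Schnyder). There is therefore nothing in the paper to compare your argument against.

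Your proposal follows the standard two-directional proof and is correct in outline. One small slip is worth flagging, in the converse direction, where you verify the cyclic Schnyder rule at an inner vertex $v_i$. You claim that clockwise from the out-red edge $v_iw_1$ one first meets the in-red edges. This is backwards. A later vertex $v_j$ that uses $v_i$ as its $v_1$-side endpoint has $v_i$ at the $w_1$-end of \emph{its} neighbour path, so $v_j$ sits above and on the $w_k$-side of $v_i$; when the face $v_jv_ix$ is created, the new edge $v_iv_j$ enters the rotation at $v_i$ adjacent to the edge toward $v_i$'s current $v_2$-side boundary neighbour, i.e., next to the out-blue side, not the out-red side. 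Symmetrically, the in-blue edges accumulate next to $v_iw_1$. With the picture you set up, the clockwise pattern is therefore out-red, in-blue, out-green, in-red, out-blue, in-green --- your in-red and in-blue blocks are interchanged --- which is exactly the paper's rule read in the appropriate orientation. This is a bookkeeping error in the verification step, not a flaw in the construction itself: the colouring you define is the classical one and does yield a Schnyder realizer for which the given canonical order is a topological order of $S_1\cup S_2\cup S_n^{-1}$.
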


We call a canonical order that is a topological order of $S_1\cup S_2 \cup S_n^{-1}$ a \emph{canonical order w.r.t. $S_1,S_2$}. See Fig.~\ref{fig:canonical-example} for an example. Note that the same Schnyder realizer may give rise to several canonical orders as for example swapping the order of $v_4$ and $v_5$ in Fig.~\ref{fig:canonical-example} results in a different canonical order w.r.t. $S_1,S_2$.

\medskip

Another vertex order that can be defined for any graph is the so-called $k$-degenerate order. For an $n$-vertex graph $G$ and a number $k \in \mathbb{N}$ $(v_1,\ldots,v_n)$ is a \emph{$k$-degenerate order of $G$} if for each $i = 1,\ldots,n$ the vertex $v_i$ has no more than $k$ neighbors in the induced subgraph $G_{i-1}$ of $G$ on $\{v_1,\ldots,v_{i-1}\}$. A graph is \emph{$k$-degenerate} if it admits some $k$-degenerate order, and \emph{maximally $k$-degenerate} if for each $i \in \{1,\ldots,n\}$ vertex $v_i$ has exactly ${\rm min}\{i-1,k\}$ neighbors in $G_{i-1}$. A very important subclass of maximally $k$-degenerate graphs are $k$-trees. A maximally $k$-degenerate graph $G$ is a \emph{$k$-tree} if in some $k$-degenerate order of $G$ the neighborhood of $v_i$ is a clique in $G_{i-1}$, $i = 1,\ldots,n$. Equivalently, $k$-trees are exactly the inclusion-maximal graphs of tree-width $k$.

\section{Contact L-graphs: Characterization and Recognition}\label{sec:characterization}

An \emph{L-contact representation}, or \emph{L-representation} for short, of a graph $G = (V,E)$ is a set of interior disjoint axis-aligned L-shapes, one for each vertex, such that two L-shapes touch in a point if and only if the corresponding vertices in $G$ are adjacent. Unless stated otherwise we allow only one of the four possible rotations of L-shapes here. An L-representation is \emph{degenerate} if two endpoints of L-shapes or an endpoint and a bend coincide, and \emph{non-degenerate} otherwise. A graph is an \emph{L-contact graph} or simply \emph{L-graph} if it admits an L-representation. Since one can remove any contact in an L-representation by shortening one L, L-graphs are closed under taking subgraphs. Throughout this section we consider \emph{maximal L-graphs} only, that is, L-graphs (with at least two vertices), that are not proper subgraphs of another L-graph.

For a fixed L-representation we denote the L-shape corresponding to a vertex $v$ by $L_v$. The vertex for the L-shape with topmost horizontal leg and the vertex for the L-shape with rightmost vertical leg is denoted by $v_1$ and $v_2$, respectively. The edge $v_1v_2$ is called the \emph{base edge of the L-representation}. Every L-representation defines a plane embedding of the underlying L-graph $G$. Each inner face of $G$ corresponds to a rectilinear polygon whose boundary lies in the union of L-shapes for the vertices of that face. 
The L-shapes whose bends lie in at most one such rectilinear polygon correspond to the outer vertices of $G$. The maximal rectilinear path $S$ containing all bends of these L-shapes is called the \emph{outer staircase of the L-representation}. The L-shapes appear along $S$ starting with $L_{v_1}$ and ending with $L_{v_2}$ in the same order as the outer vertices of $G$ along the outer face starting with $v_1$ and ending with $v_2$; see Fig.~\ref{fig:example}.

%

\begin{figure}[t!]
 \centering
 \subfigure[]{
  \includegraphics{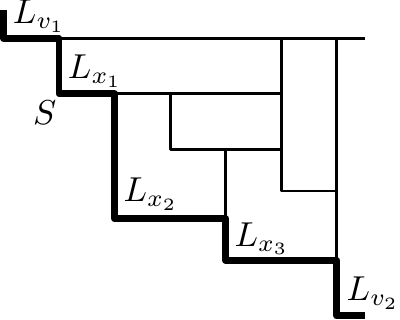}
  \label{fig:example-L}
 }
 \subfigure[]{
  \includegraphics{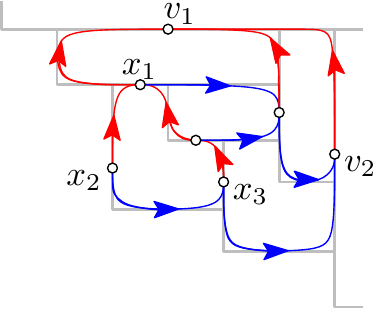}
  \label{fig:example-labeling}
 }
 \subfigure[]{
  \includegraphics{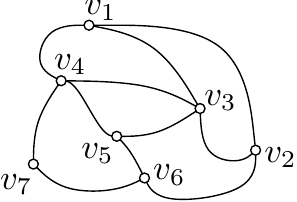}
  \label{fig:example-canonical}
 }
 \caption{\subref{fig:example-L} An L-representation with base edge $v_1v_2$ and outer staircase $S \subset L_{v_1} \cup L_{x_1} \cup L_{x_2} \cup L_{x_3} \cup L_{v_2}$ drawn thick. \subref{fig:example-labeling} The corresponding embedded L-graph with the corresponding edge labeling. \subref{fig:example-canonical} A corresponding $2$-canonical order of the graph.}
 \label{fig:example}
\end{figure}

For a maximally planar graph $G$ and a Schnyder realizer $(S_1,S_2,S_n)$ of $G$ we define $G \setminus S_n$ as the graph $(V \setminus v_n, E \setminus S_n)$.

\begin{lemma}\label{lem:L-is-2/3-Schnyder}
 For every maximal L-graph $G$ with base edge $v_1v_2$ there is a maximally planar graph $H$ with a Schnyder realizer $(S_1,S_2,S_n)$, such that $G = H \setminus S_n$.
\end{lemma}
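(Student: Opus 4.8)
The plan is to read two of the three Schnyder trees off the geometry, use them to order the vertices, and then invoke Lemma~\ref{lem:Schnyder-canonical}. Assume first that the L-representation is non-degenerate. Each shape $L_v$ has an \emph{upper endpoint} (tip of the vertical leg) and a \emph{right endpoint} (tip of the horizontal leg). Since $L_{v_1}$ has the topmost horizontal leg, nothing lies above it, so the upper endpoint of $L_{v_1}$ touches no other shape; symmetrically the right endpoint of $L_{v_2}$ touches nothing. Maximality forces every other upper endpoint to lie on the horizontal leg of a (by non-degeneracy, unique) other shape and every other right endpoint on the vertical leg of a unique other shape, and — since otherwise some leg could be lengthened to produce a new contact — it forces these, together with the base edge $v_1v_2$, to be \emph{all} edges of $G$. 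Colour $uv$ \emph{red}, directed $u\to v$, when the upper endpoint of $L_u$ lies on $L_v$, and \emph{blue}, directed $u\to v$, when the right endpoint of $L_u$ lies on $L_v$ (cf.\ Fig.~\ref{fig:example-labeling}). Following red edges strictly increases the height of the horizontal leg, following blue edges strictly increases the $x$-coordinate of the vertical leg, and the quantity ``height of the horizontal leg minus $x$-coordinate of the vertical leg'' strictly increases along every red edge and every reversed blue edge; hence the red edges form a tree $T_1$ directed to $v_1$, the blue edges a tree $T_2$ directed to $v_2$, and $T_1\cup T_2^{-1}$ is acyclic.

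Let $(v_1,v_2,v_3,\dots,v_m)$ be the order in which the shapes can be built up ``from outside'' (formally, a linear extension of $T_1\cup T_2^{-1}$ beginning with the base edge), and process them in this order. When $L_{v_i}$ is added, it touches exactly two earlier shapes — the ones carrying its upper and its right endpoint — and $v_i$ then lies on the outer boundary of the current sub-representation, with those two neighbours as the two ends of a contiguous arc of that boundary (this is the way the outer staircase grows; cf.\ Fig.~\ref{fig:example}). Triangulating, by a fan at $v_i$, the bounded region cut off by that arc adds a set of edges; doing this for all $i$ and then adding a final vertex $v_n$ joined to the whole resulting outer boundary produces a maximally planar graph $H\supseteq G$ for which $(v_1,\dots,v_m,v_n)$ is a canonical order with base edge $v_1v_2$; the edges of $H$ not in $G$ are exactly the fan edges together with the edges incident to $v_n$.

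Finally, by Lemma~\ref{lem:Schnyder-canonical} this canonical order is a topological order of $S_1\cup S_2\cup S_n^{-1}$ for some Schnyder realizer $(S_1,S_2,S_n)$ of $H$; chasing the correspondence, $S_1$ and $S_2$ coincide with $T_1$ and $T_2$ (up to recolouring the few edges incident to $v_1$ and $v_2$), and $S_n$ is exactly the set of fan edges together with the inner edges at $v_n$. Hence $H\setminus S_n=G$, as required. The step I expect to be the main obstacle is the middle one: proving that the geometry really forces each newly inserted L-shape to touch exactly two earlier shapes that span a contiguous arc of the current outer boundary, so that the fans are well defined and $H$ is indeed a triangulation. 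The remaining points — the reduction to non-degenerate representations, the edge count $|E(G)|=2|V(G)|-3$, the slightly delicate colour bookkeeping at $v_1$, $v_2$ and the base edge, and the verification that Lemma~\ref{lem:Schnyder-canonical} hands back the intended realizer — are routine.
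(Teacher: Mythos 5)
Your route is genuinely different from the paper's. The paper disposes of this lemma in three lines of geometry: it introduces a T-shape $T_{v_n}$ whose vertical leg lies to the left of $L_{v_1}$ and whose horizontal leg lies below $L_{v_2}$, extends every L-shape into a T-shape by growing a left leg until it meets some vertical leg, and then invokes the de~Fraysseix et al.\ correspondence between T-contact representations of maximally planar graphs and Schnyder realizers (the first theorem of Section~\ref{sec:preliminaries}). The newly created left-leg contacts are exactly the edges of $S_n$, so $G = H \setminus S_n$ falls out immediately. You instead extract the red and blue trees from the L-representation, order the vertices by a potential function, and rebuild $H$ combinatorially by fan triangulation --- in effect proving \ref{item:L-representation} $\Rightarrow$ \ref{item:2-canonical} $\Rightarrow$ \ref{item:2/3-Schnyder} of Theorem~\ref{thm:characterization} directly, where the paper routes that direction of the equivalence through the present lemma and its slick T-augmentation.

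The difficulty is that your argument has a genuine gap, and you have put your finger on it yourself: you never prove that when $L_{v_i}$ is inserted, $v_i$ lies on the outer boundary of the sub-representation of its predecessors and its two earlier neighbours delimit a contiguous arc of that boundary. This is not a routine verification; it is precisely the outer-staircase structure of L-representations, and under your approach it carries essentially all of the content of the lemma --- without it the fans are not well defined, $H$ is not shown to be planar (let alone a triangulation), and the identification of the fan edges with $S_n$ collapses. The earlier parts of your argument are sound: maximality does force every top/right endpoint other than those of $L_{v_1}$, $L_{v_2}$ to make a contact (though ``lengthen a leg'' needs the extra case where the extended leg escapes to infinity, in which case one attaches a fresh L-shape to it, again contradicting maximality), the monotonicity of the quantity $h-x$ does certify acyclicity, and the colour bookkeeping at the end is correct modulo the same head-versus-tail orientation convention that the paper itself is loose about in Lemma~\ref{lem:Schnyder-canonical}. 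The moral of the paper's proof is that one can avoid ever establishing the staircase invariant here by completing the L's to T's and outsourcing the combinatorics to the known T-contact theorem; if you prefer your bottom-up reconstruction, you must supply an inductive proof of the staircase claim, which amounts to running the argument of Lemma~\ref{lem:L-is-2-canonical} in reverse.
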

\begin{proof}
%
 We consider any L-representation of $G$ with base edge $v_1v_2$. We introduce a T-shape $T_{v_n}$ whose vertical leg lies to the left of $L_{v_1}$ and whose horizontal leg lies below $L_{v_2}$. We obtain a T-representation by adding a left leg to every L-shape so that its endpoint touches some vertical leg but is interior disjoint from any other leg. Let $H$ be the maximally planar graph with that T-representation and $(S_1,S_2,S_n)$ be the corresponding Schnyder realizer. Then $G = H \setminus S_n$.
\end{proof}

Recall from Definition~\ref{def:canonical-order} that if $(v_1,\ldots,v_n)$ is a canonical order of some biconnected graph $G$, then for every $i \in \{3,\ldots,n\}$ the subgraph $G_i = G[v_1,\ldots,v_i]$ is also biconnected, which implies that for each $i = 3,\ldots,n$ the vertex $v_i$ has degree at least two in $G[v_1,\ldots,v_i]$. A \emph{$2$-canonical order} is a canonical order for which each $v_i$ has degree exactly two in $G_i$. In particular a $2$-canonical order is a special $2$-degenerate order of a planar graph that depends on the chosen embedding. Note that there are planar maximal $2$-degenerate graphs that admit no $2$-canonical order; see Fig.~\ref{fig:not-2-canonical-2} and~\subref{fig:not-2-canonical}.
Note also that the graph in Fig.~\ref{fig:not-2-canonical-2} admits a $2$-degenerate order in which every vertex is put into the outer face of the graph induced by vertices of smaller index.

\begin{figure}[t!]
 \centering
 \subfigure[]{
  \includegraphics{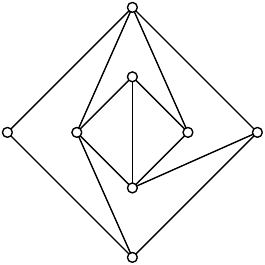}
  \label{fig:not-2-canonical-2}
 }
 \subfigure[]{
  \includegraphics{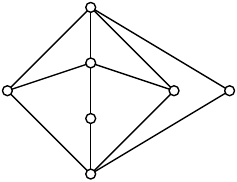}
  \label{fig:not-2-canonical}
 }
 \hspace{1em}
 \subfigure[]{
  \includegraphics{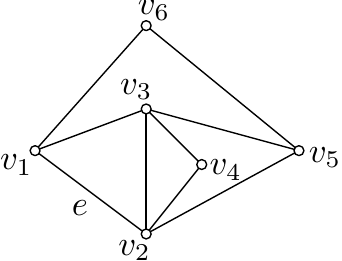}
  \label{fig:no-good-starting-edge}
 }
 \caption{\subref{fig:not-2-canonical-2},\subref{fig:not-2-canonical} Planar maximal $2$-degenerate graphs that admit no $2$-canonical order. \subref{fig:no-good-starting-edge} A graph with a $2$-canonical order with base edge $e = v_1v_2$. 
 }
 \label{fig:small-graphs}
\end{figure}

\begin{lemma}\label{lem:L-is-2-canonical}
 If a graph admits a $2$-canonical order with base edge $v_1v_2$ then it admits an L-representation with base edge $v_1v_2$. Moreover, given a $2$-canonical order an L-representation can be found in linear time.
\end{lemma}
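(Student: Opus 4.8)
\medskip\noindent\emph{Proof plan.}
The plan is to build the representation incrementally, adding one L-shape per vertex in the given order $v_1,\dots,v_n$, while keeping tight control on the outer staircase. Normalize all L-shapes to the form $\llcorner$ (bend at the lower left, one leg pointing up, one pointing right). We maintain the invariant that after placing $L_{v_1},\dots,L_{v_i}$ we have an L-representation of $G_i$ such that: its outer staircase, traversed from $L_{v_1}$ to $L_{v_2}$, passes through the L-shapes of the outer path $C_i\setminus v_1v_2$ in that order; the staircase only moves rightwards and downwards; each outer vertex $u$ contributes to it, consecutively, a piece of its vertical leg ending at the bend of $L_u$ followed by a piece of its horizontal leg starting at that bend; and for any two consecutive outer vertices $a,b$ (with $a$ before $b$) the contact point of $L_a$ and $L_b$ is the upper right corner of the axis-aligned rectangle $R_{ab}$ spanned by the bends of $L_a$ and $L_b$, while the interior of $R_{ab}$ together with its left and bottom sides is disjoint from every L-shape. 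Thus $R_{ab}$ is an empty pocket of the outer face wedged between $L_a$ and $L_b$, available for later insertions.

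For the base case $i=3$: since $v_3$ has degree exactly two in $G_3$, the graph $G_3$ is the triangle $v_1v_2v_3$, and we place $L_{v_1},L_{v_2},L_{v_3}$ by hand (with the right endpoint of $L_{v_1}$'s horizontal leg on $L_{v_2}$'s vertical leg, and $L_{v_3}$ nested between them), after which all parts of the invariant are immediate. For the inductive step, let $a,b$ be the two neighbours of $v_i$ in $G_{i-1}$. By Definition~\ref{def:canonical-order} together with the fact that $v_i$ has exactly two neighbours in $G_{i-1}$, these are two consecutive vertices of the outer path $C_{i-1}\setminus v_1v_2$ joined by an edge $ab$; hence, by the invariant, $L_a$ and $L_b$ touch at the upper right corner of the empty pocket $R_{ab}$. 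Put the bend of $L_{v_i}$ anywhere in the interior of $R_{ab}$, extend its vertical leg upward until it meets the top side of $R_{ab}$ (a sub-segment of $L_a$'s horizontal leg) and its horizontal leg rightward until it meets the right side of $R_{ab}$ (a sub-segment of $L_b$'s vertical leg). Since $R_{ab}$ contained no L-shape apart from those two sides, $L_{v_i}$ acquires precisely the two new contacts $v_ia$ and $v_ib$; every old contact is preserved; the contact $ab$ now bounds the new triangular face $a\,v_i\,b$; and $R_{ab}$ splits into the two smaller pockets spanned by the bend of $L_{v_i}$ together with the bend of $L_a$, respectively of $L_b$, each again empty except for its top and right sides. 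Hence we obtain an L-representation of $G_i$ satisfying the invariant, with $v_i$ occupying the outer staircase between $a$ and $b$, as required.

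After step $n$ this produces an L-representation of $G=G_n$; by construction $L_{v_1}$ is the unique L-shape whose horizontal leg is topmost and $L_{v_2}$ the unique one whose vertical leg is rightmost, so its base edge is $v_1v_2$. For the running time, the empty-pocket invariant makes each insertion a purely local $O(1)$ operation once the two vertices $a,b$ to which $v_i$ attaches are known; maintaining the outer path as a doubly linked list and precomputing the attachment vertices of all $v_i$ from the canonical order (which takes linear time in total) therefore yields the representation in linear time, with concrete coordinates assigned on the fly (or, to keep them polynomially bounded, in a second linear pass over the nested family of pockets). The only step that needs genuine care is the invariant itself: one must check that the empty-pocket property is at once strong enough to forbid any unwanted contact of the newly inserted L-shape and stable under the insertion; everything else is bookkeeping.
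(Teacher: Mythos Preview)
Your proposal is correct and follows essentially the same inductive approach as the paper: both add L-shapes one by one in the $2$-canonical order, using that the two predecessors of $v_i$ are consecutive on the outer staircase of $G_{i-1}$. The paper's proof is terse---it simply asserts that an $L_x$ touching $L_u$ and $L_v$ can be added---whereas your explicit ``empty pocket'' invariant supplies the geometric justification the paper leaves implicit. One minor quibble: as stated, the corners of $R_{ab}$ (in particular the bend of $L_a$ at the upper-left corner and, after insertion, the bend of $L_{v_i}$ at the lower-right corner of the new pocket $R_{a,v_i}$) do lie on L-shapes, so the claim that the left and bottom sides are disjoint from every L-shape should be read for the relatively open sides; this does not affect the argument.
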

\begin{proof}
 We use induction on the number of vertices, where the base case of just two vertices trivially holds. So let $G$ be a graph on at least three vertices.
%
%
%
 Assume that $G$ admits a $2$-canonical order and let $x$ be the last vertex in the order. Applying induction to $G \setminus x$ -- a graph with a $2$-canonical order in which both neighbors of $x$ lie on the outer face -- we obtain an L-representation of $G \setminus x$. The L-shapes for the two neighbors, $u$ and $v$, of $x$ appear on the outer staircase $S$. It is now possible to add an L-shape $L_x$, making contact with $L_u$ and $L_v$, and this way obtain an L-representation of $G$.
\end{proof}

For a graph $G$ with a fixed plane embedding and distinguished outer edge $v_1v_2$ we define an \emph{edge labeling of $G$ with base edge $v_1v_2$} to be an orientation and coloring of the edges of $G$ different from $v_1v_2$ with colors $1$ (red) and $2$ (blue), such that:
\begin{enumerate}[label = (\roman*)]
 \item Around every inner vertex $v$ in clockwise order there is one outgoing red edge, one outgoing blue edge, a possibly empty set of incoming red edges, a possibly empty set of incoming blue edges.
 \item All non-base edges at $v_1$ ($v_2$) are incoming at $v_1$ ($v_2$) and colored red (blue).
 \item Reversing all edges of color $1$ gives an acyclic graph.
\end{enumerate}

\noindent
The labeling defined above is a special case of the edge labeling in~\cite{full}, which characterizes contact L-representations with L-shapes in all four rotations.

\begin{theorem}\label{thm:characterization}
 For every graph $G$ with a plane embedding and distinguished outer edge $v_1v_2$ the following are equivalent:
 \begin{enumerate}[label = (C\arabic*)]
  \item $G$ admits an L-representation with base edge $v_1v_2$.\label{item:L-representation}
  \item $G = H \setminus S_n$ for some maximally planar graph $H$ and Schnyder realizer $(S_1,S_2,S_n)$.\label{item:2/3-Schnyder}
  \item $G$ admits an edge labeling with base edge $v_1v_2$.\label{item:edge-labeling}
  \item $G$ admits a $2$-canonical order with base edge $v_1v_2$.\label{item:2-canonical}
 \end{enumerate}
\end{theorem}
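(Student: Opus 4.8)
The plan is to close the cyclic chain of implications
\ref{item:L-representation} $\Rightarrow$ \ref{item:2/3-Schnyder} $\Rightarrow$ \ref{item:edge-labeling} $\Rightarrow$ \ref{item:2-canonical} $\Rightarrow$ \ref{item:L-representation}.
The implication \ref{item:L-representation} $\Rightarrow$ \ref{item:2/3-Schnyder} is exactly Lemma~\ref{lem:L-is-2/3-Schnyder} and \ref{item:2-canonical} $\Rightarrow$ \ref{item:L-representation} is exactly Lemma~\ref{lem:L-is-2-canonical}, so only the two middle implications remain.

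For \ref{item:2/3-Schnyder} $\Rightarrow$ \ref{item:edge-labeling} I would just restrict the given Schnyder realizer. If $G = H \setminus S_n$ with $(S_1,S_2,S_n)$ a Schnyder realizer of the maximally planar graph $H$ whose outer vertices are $v_1,v_2,v_n$, then the edges of $G$ other than $v_1v_2$ are precisely the red and blue edges of the realizer, and I keep their colours and orientations. Property~(ii) of an edge labeling is inherited verbatim from property~(ii) of the realizer, since deleting $v_n$ does not affect the inner edges at $v_1$ or $v_2$. Property~(i) follows from property~(i) of the realizer by erasing, in the clockwise pattern around each inner vertex of $G$, the unique outgoing green edge and all incoming green edges -- every inner vertex of $G$ is still an inner vertex of $H$, so the Schnyder rule applies -- which leaves exactly one outgoing red, one outgoing blue, and the two (possibly empty) blocks of incoming red and incoming blue edges. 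For property~(iii), reversing the red edges of $G$ yields the sub-digraph $S_1^{-1}\cup S_2$ of $S_1^{-1}\cup S_2\cup S_n$, which is acyclic because, up to a cyclic renaming of the colours, it is the acyclic digraph $S_1\cup S_2\cup S_n^{-1}$ guaranteed for the shifted realizer; and orienting the base edge $v_1\to v_2$ creates no directed cycle since $v_2$ has out-degree $0$ in $S_1^{-1}\cup S_2$.

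For \ref{item:edge-labeling} $\Rightarrow$ \ref{item:2-canonical} I would argue as follows. Let $S_1,S_2$ be the red and blue edges of the edge labeling with their orientations; by~(i) every inner vertex is the tail of exactly one red and one blue edge, and by~(ii) both $v_1$ and $v_2$ are sinks of $S_1\cup S_2$. The first step is to prove that $S_1\cup S_2$ is acyclic. One then orders the vertices as $(v_1,v_2,v_3,\dots,v_n)$ so that $v_1,v_2$ come first and every vertex comes after the heads of its outgoing red and blue edges -- possible precisely because $S_1\cup S_2$ is acyclic and $v_1,v_2$ are sinks. For $i\ge 3$ the only neighbours of $v_i$ among $\{v_1,\dots,v_{i-1}\}$ are then the heads of its two outgoing coloured edges, so $\deg_{G_i}(v_i)=2$ automatically. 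What still needs to be shown is that this is a canonical order at all: by induction on $i$ one verifies that $G_i$ is biconnected with outer boundary a cycle $C_i$ through $v_1v_2$ and that the two earlier neighbours of $v_i$ are consecutive on $C_{i-1}$, using the clockwise rotation condition~(i) at the vertices exposed when $v_i$ is removed to see that deleting $v_i$ splices its two boundary edges into one boundary edge of $G_{i-1}$.

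The main obstacle is this last implication, and it has two delicate points. One is the acyclicity of $S_1\cup S_2$: condition~(iii) only yields acyclicity of $S_1^{-1}\cup S_2$, and acyclicity of $S_1\cup S_2$ genuinely needs the planar rotation condition~(i). The natural proof takes an innermost directed cycle of $S_1\cup S_2$ and derives a contradiction from the fact that, by~(i), the outgoing red and blue edges at a vertex are consecutive in its rotation, so all edges leaving such a cycle would have to leave it on the same side. The other point is the planarity bookkeeping in the inductive step -- maintaining biconnectivity, tracking how $C_{i-1}$ turns into $C_i$, and ruling out that the two earlier neighbours of some $v_i$ reduce to the pair $\{v_1,v_2\}$ joined by the base edge -- which is exactly where the fixed embedding enters and must be handled carefully.
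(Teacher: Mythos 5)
Your overall architecture---the cycle \ref{item:L-representation} $\Rightarrow$ \ref{item:2/3-Schnyder} $\Rightarrow$ \ref{item:edge-labeling} $\Rightarrow$ \ref{item:2-canonical} $\Rightarrow$ \ref{item:L-representation} with the first and last implications delegated to Lemmas~\ref{lem:L-is-2/3-Schnyder} and~\ref{lem:L-is-2-canonical}---is exactly the paper's, and your expansion of \ref{item:2/3-Schnyder} $\Rightarrow$ \ref{item:edge-labeling} (which the paper dismisses as immediate) is correct, including the cyclic-renaming argument showing $S_1^{-1}\cup S_2$ is acyclic.

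The gap is in \ref{item:edge-labeling} $\Rightarrow$ \ref{item:2-canonical}. You take an \emph{arbitrary} order in which every vertex follows the heads of its two outgoing edges and then propose to verify that it is $2$-canonical; but such an order need not be $2$-canonical, so the verification cannot succeed in general. Concretely, let $G$ have vertices $v_1,v_2,a,b$ and edges $v_1v_2$, $v_1a$, $av_2$, $v_1b$, $bv_2$, embedded with outer face $v_1av_2$ and $b$ inside the quadrilateral face; orient $av_1,bv_1$ outgoing red and $av_2,bv_2$ outgoing blue. This satisfies (i)--(iii), and both $a$ and $b$ are sources of $S_1\cup S_2$, so $(v_1,v_2,a,b)$ meets your requirement; yet it is not a canonical order, because $b$ lies in the bounded face of $G[\{v_1,v_2,a\}]$, violating condition~(ii) of Definition~\ref{def:canonical-order} (the valid order is $(v_1,v_2,b,a)$). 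So the acyclicity of $S_1\cup S_2$, which you correctly flag as needing proof, is not the real difficulty: the order must additionally be chosen so that the vertex placed last lies on the \emph{outer face} of the graph that remains. The paper does exactly this: it walks along the outer path from $v_1$ to $v_2$, finds a vertex $x_i$ whose two incident outer edges are both outgoing, uses the directed red path to $v_1$, the directed blue path to $v_2$ and condition~(iii) to conclude that $x_i$ has no incoming edges at all, places $x_i$ last, and recurses on $G\setminus x_i$. That single argument delivers the existence of a suitable source, its membership on the outer face, and the outer-cycle bookkeeping by induction, and it sidesteps any need to prove acyclicity of $S_1\cup S_2$ up front. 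Your proof would need to be restructured along these lines.
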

\begin{proof}
 \begin{itemize}[itemindent = 45pt]
  \item[\ref{item:L-representation} $\Longrightarrow$ \ref{item:2/3-Schnyder}:] This is Lemma~\ref{lem:L-is-2/3-Schnyder}.
  
  \item[\ref{item:2/3-Schnyder} $\Longrightarrow$ \ref{item:edge-labeling}:] Follows immediately from the definition of a Schnyder realizer.
  
  \item[\ref{item:edge-labeling} $\Longrightarrow$ \ref{item:2-canonical}:] Consider an orientation and coloring of $E(G) \setminus v_1v_2$ with the above properties. We do induction on the number of vertices of $G$. For $|V(G)| = 2$ there is nothing to show. For $|V(G)| \geq 3$ consider the path $P=x_0,x_1,\ldots,x_k,x_{k+1}$ on the outer face of $G$ not containing the edge $v_1v_2$, where $x_0 = v_1$ and $x_{k+1}= v_2$. Since the edges $x_0x_1$ and $x_kx_{k+1}$ are oriented towards $x_0$ and $x_{k+1}$, respectively, for some $i \in \{1,\ldots,k\}$ the edges $x_{i-1}x_i$ and $x_ix_{i+1}$ are outgoing at $x_i$. Since every vertex different from $v_1$ and $v_2$ has one outgoing red and one outgoing blue edge, we find a directed red path from $x_i$ to $v_1$ and a directed blue path from $x_i$ to $v_2$. No vertex $v \neq x_i$ lies on both these paths, since otherwise we would have a directed after reversing all red edges. It follows that $x_{i-1}x_i$ is colored red and $x_ix_{i+1}$ is blue. From the local 
coloring around $x_i$ we see that $x_i$ has no incoming edge. Applying induction to $G \setminus x_i$ we obtain a $2$-canonical order of $G \setminus x_i$ and putting $x_i$ at the end of this order gives a $2$-canonical order of $G$.
  
  \item[\ref{item:2-canonical} $\Longrightarrow$ \ref{item:L-representation}:] This is Lemma~\ref{lem:L-is-2-canonical}.
 \end{itemize}
\end{proof}


The remainder of this section deals with the recognition problem of maximal L-graphs. 
From Theorem \ref{thm:characterization}, every maximal L-graph is necessarily 2-degenerate and planar. Moreover, both planarity \cite{Hopcroft1974} and 2-degeneracy can be tested in linear time. 
For the maximal 2-degeneracy test, we simply iteratively remove a vertex of smallest degree. Clearly, if every vertex removed has degree exactly two, then $G$ is maximal 2-degenerate. 
The correctness of this method follows from the fact that no pair of degree two vertices are adjacent in a maximal 2-degenerate graph. 
This test is easily implemented in linear time via a pre-processing bucket sort of the vertices by degree and adjusting the ``bucket membership'' of each vertex with each vertex deletion. 
Thus, to recognize maximal L-graphs we will focus on the planar 2-degenerate graphs.

We now demonstrate a linear time test to determine whether $G$ has a 2-canonical order with a given base edge $e=v_1v_2$. We first the consider 2-degenerate orders of $G$ from a fixed base edge.

\begin{lemma}\label{lem:unique_precedence}
Let $G$ be planar 2-degenerate with an edge $e=v_1v_2$. For every vertex $v$ of $G$, in every 2-degenerate order starting from $e$, the neighbors of $v$ that precede $v$ are the same. Let $\overrightarrow{G_e}$ denote the orientation of $G$ according to the precedence order with base edge $e$. 
\end{lemma}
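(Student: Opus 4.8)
The plan is to prove the statement by induction on the prefix length of the $2$-degenerate order, showing that at every stage the set of already-placed vertices and the orientation of edges among them is forced. More precisely, I would first argue that every planar $2$-degenerate graph with a distinguished edge $e = v_1 v_2$ has a canonical ``peeling'' from the other end: repeatedly remove a vertex of degree at most $2$ that is distinct from $v_1, v_2$. The key structural fact to establish is that in a maximal $2$-degenerate graph no two degree-$2$ vertices are adjacent (already noted in the text for the recognition algorithm), so the set of removable vertices at each step forms an independent set, and removing any one of them does not change which \emph{other} vertices are removable or what their back-neighbourhoods are. This ``diamond'' / local-confluence property is exactly what makes the precedence relation well-defined independently of the order chosen.

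The main technical step is the confluence argument. Suppose $(v_1, v_2, \dots, v_n)$ and $(v_1, v_2, w_3, \dots, w_n)$ are two $2$-degenerate orders starting from $e$. I would show by induction on $i$ that the \emph{set} $\{v_1,\dots,v_i\}$ need not equal $\{v_1, v_2, w_3, \dots, w_i\}$ in general, but — and this is the real claim — for each vertex $v$, the set of neighbours appearing before $v$ is the same in both orders. The cleanest route is: reverse the order, so we are deleting vertices of degree $\le 2$; show that if $x$ and $y$ are both currently deletable (degree $\le 2$, not $v_1$ or $v_2$) then $x$ is still deletable after deleting $y$, because $x \neq y$ are non-adjacent (degree-$2$ vertices in the current maximal $2$-degenerate graph cannot be adjacent) and deleting $y$ cannot raise the degree of $x$. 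Hence the multiset of ``deletion events'' and, crucially, the back-neighbour set $N(x) \cap (\text{current graph} \setminus x)$ recorded at $x$'s deletion does not depend on the order — a standard exchange/commutation argument (any two orders are connected by adjacent transpositions of independent deletable vertices, each of which preserves all recorded back-neighbourhoods).

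The subtle point I expect to be the main obstacle is handling the interaction with $v_1$ and $v_2$: these are never deleted, so at the very end a small graph on $\{v_1, v_2\}$ (or a triangle-like remnant) survives, and one must check the definitions line up so that ``neighbours that precede $v$'' is still well-defined for $v_1$ and $v_2$ themselves (vacuously, or via the base edge). A second potential subtlety is that $G$ need not be \emph{maximal} $2$-degenerate — but we may freely complete $G$ to a maximal $2$-degenerate planar graph $G'$ on the same vertex set with the same embedding, run the argument on $G'$, and then restrict: the precedence order on $G'$ restricts to a valid precedence assignment on $G$, and every $2$-degenerate order of $G$ extends to one of $G'$ agreeing on back-neighbourhoods. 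Once the set of preceding neighbours of each $v$ is shown order-independent, orienting each edge $uv$ from the earlier endpoint to the later one is well-defined, giving the claimed orientation $\overrightarrow{G_e}$; acyclicity is immediate since the orientation is induced by a total order. This also gives, as a byproduct, that $\overrightarrow{G_e}$ has all edges at $v_1$ and $v_2$ incoming (from the base edge's endpoints being first) and out-degree exactly $2$ at every other vertex when $G$ is maximal $2$-degenerate, matching the later use of this lemma.
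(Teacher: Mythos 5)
Your core argument is sound for the case the lemma is actually used in, but it takes a genuinely different route from the paper. The paper marks vertices forward from $\{v_1,v_2\}$ (a vertex is marked once exactly two of its neighbours are marked, and the density bound $|E(H)|\le 2|V(H)|-3$ rules out a third forced predecessor); you peel from the back and run a confluence/exchange argument, keyed to the fact that two simultaneously deletable degree-$2$ vertices in a maximal $2$-degenerate graph are non-adjacent. Your route is arguably easier to make fully rigorous than the paper's rather terse marking sketch, and it can be shortened further: a maximal $2$-degenerate graph on $n\ge 3$ vertices has exactly $2n-3$ edges, and since the back-degrees in any $2$-degenerate order are bounded by $\min\{i-1,2\}$ and sum to $|E|$, \emph{every} $2$-degenerate order gives every vertex $v_i$ with $i\ge 3$ exactly two back-neighbours. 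Hence any degree-$2$ vertex $x\neq v_1,v_2$ has all of $N(x)$ preceding it in every order, so its predecessor set is forced outright; deleting $x$ and inducting on $G\setminus x$ (again maximal $2$-degenerate, with every order of $G$ restricting to an order of $G\setminus x$) finishes the proof without any transposition bookkeeping.

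The one genuine gap is your treatment of non-maximal $G$. The statement read literally, for arbitrary planar $2$-degenerate $G$, is false: on the $4$-cycle $v_1v_2ba$ both $(v_1,v_2,a,b)$ and $(v_1,v_2,b,a)$ are $2$-degenerate orders starting from $e=v_1v_2$, yet the predecessors of $a$ are $\{v_1\}$ in the first and $\{v_1,b\}$ in the second. So your reduction --- complete $G$ to a maximal $2$-degenerate planar graph $G'$ and restrict --- cannot succeed, and the specific claim that every $2$-degenerate order of $G$ extends to one of $G'$ with the same back-neighbourhoods fails: adding the edge $v_2a$ to the $4$-cycle yields a maximal $2$-degenerate $G'$ in which $(v_1,v_2,b,a)$ gives $a$ three back-neighbours. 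The correct move is to \emph{assume} maximality rather than reduce to it; this is harmless because the recognition algorithm has already rejected every graph that is not maximal $2$-degenerate. Note that the paper's own proof makes the same silent assumption --- its rejection clause (``if every unmarked vertex has fewer than two marked neighbours, then $e$ is not the first edge of any $2$-degenerate order'') is likewise refuted by the $4$-cycle --- so you should state the maximality hypothesis explicitly rather than inherit the imprecision.
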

\begin{proof}
We prove this constructively. Clearly this is true for $v_1$ and $v_2$. Thus, we consider these vertices as marked. Now, for any vertex $v$ with exactly two marked neighbours, we know that these two vertices must precede $v$ in any 2-degenerate order. Notice that if some vertex has more than two marked neighbours then we know that this graph is not 2-degenerate since it contains a subgraph $H$ with more than $2|V(H)| -3$ edges. Similarly, if every unmarked vertex has less than two marked neighbors, then we know that $e$ is not the first edge of any 2-degenerate order. 
\end{proof}

Suppose we are given an edge $e = v_1v_2$ and need to determine whether $G$ has a 2-canonical order starting from $e$. We first construct a 2-degenerate order $\sigma$. If no such order exists, we reject $e$. Otherwise, by Lemma \ref{lem:unique_precedence}, we use $\sigma$ to construct $\overrightarrow{G_e}$. 

We initialize the L-representation ${\bf L} = \{L_{v_1}, L_{v_2}\}$ where $L_{v_1}$ is the ``top-most'' L-shape and $L_{v_2}$ is the ``right-most'' L-shape. We also initialize the admissible vertices $A$ as the vertices that could be added next according to $\overrightarrow{G}$ (i.e., $A$ contains the vertices adjacent to both $v_1$ and $v_2$). 

We now describe the main loop of our algorithm. Consider any admissible vertex $u_1$ and let $x$ and $y$ be $u_1's$ neighbors with $L_x, L_y \in {\bf L}$. Moreover, let $u_2, ..., u_k$ be the other admissible vertices adjacent to both $x$ and $y$. Notice that in order to add every $L_{u_i}$, we need an appropriate visibility between $L_x$ and $L_y$ in $\bf{L}$. However, we delay testing this until the end of the algorithm to save time. Observe the following properties of $u_1, \ldots, u_k$. The L-shapes corresponding to these vertices will be ``stacked'' on top of each other. This means that, if $e$ is the base edge of an L-representation of $G$, no pair $u_i$, $u_j$ can belong to the same connected component of $G \setminus \{x,y\}$. Thus, we let $H_i$ be the connected component of $G \setminus \{x,y\}$ which contains $u_i$. We now consider two cases. First, if (wlog) $H_1$ contains $v_1$, then $L_{u_1}$ must be ``lowest'' L-shape among $L_{u_1}, \ldots, L_{u_k}$ in any representation since it requires a path of L-shapes that reaches $L_{v_1}$ while avoiding $L_x$ and $L_y$. 
In particular, for each $i \in \{2, \ldots, k\}$,  we need $G_i = (G[H_i \cup \{x,y\}]$ together with the edge $xy$) to have an L-representation ${\bf L_i}$ with $xy$ as the base edge. Moreover, if $H_1$ does not contain $v_1$, we also need such an ${\bf L_1}$ for $G_1$. We recursively construct these $\bf{L_i}$'s then insert them into $\bf{L}$. 
If any recursive call fails, we know $e$ was not a good base edge for $G$. If $H_1$ contained $v_1$, we add the an L-shape for $u_1$ to $\bf{L}$, and update the admissible vertices with respect to $u_1$ (note: we don't need to update with respect to $u_2, \ldots, u_k$ since we have already processed their entire connected components). From here we repeat this main loop until we have exhausted all vertices or we have found a contradiction. After exhausting the vertices we check whether our constructed representation is correct. 

This completes the description of the algorithm and it is easy to see that it runs in polynomial time.
If one is careful, it can be implemented in linear time. In particular, we can augment $\overrightarrow{G}$ to capture the hierarchical representation of the connected components for each relevant separating pair $x,y$. Moreover, we know that we only need to consider a linear number of such separating pairs since each such pair corresponds to an admissible vertex. Finally, we can easily group the admissible vertices by their preceding neighbors when we add them to the admissible set. This will allow us to consider the related admissible vertices without searching through all admissible vertices. 
Thus, by applying the above test for every edge of $G$, we can recognize the 2-canonical graphs in quadratic time. 

\section{Equilateral L-representations and Related Representations}\label{sec:equilateral}

Every L-representation of $G$ with base edge $v_1v_2$ induces an edge labeling of $G$ with base edge $v_1v_2$, by orienting an edge $uv$ from $u$ to $v$ if an endpoint of $L_u$ is contained in the interior of $L_v$, and coloring it red (blue) if it is the top (right) endpoint of $L_u$. We say that two L-representations are \emph{equivalent} if they induce the same edge labeling. An L-shape is \emph{equilateral} if its horizontal and vertical leg are of the same length. An \emph{equilateral L-representation} is one with only equilateral L-shapes.
  
\begin{theorem}\label{th:equiL}
 Every L-representation has an equivalent equilateral L-representation.
\end{theorem}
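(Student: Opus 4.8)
The plan is to fix an L-representation $\mathbf{L}$ of $G$ together with its induced edge labeling, and to describe the set of all equivalent L-representations by a system of linear inequalities on a finite set of real coordinates; then to exhibit an explicit feasible point of that system in which every L-shape is equilateral. Concretely, each L-shape $L_v$ is determined by four numbers: the $x$-coordinate of its vertical leg, the $y$-coordinate of its horizontal leg, and the two endpoint coordinates (the top endpoint $t_v$ on the vertical leg and the right endpoint $r_v$ on the horizontal leg). First I would read off, from the edge labeling and the fixed planar embedding coming from $\mathbf{L}$, exactly which incidences must hold: a red edge $u\to v$ forces $t_u$ (the top endpoint of $L_u$) to lie in the relative interior of the horizontal leg of $L_v$, a blue edge $u\to v$ forces $r_u$ to lie in the relative interior of the vertical leg of $L_v$, and the cyclic order of contacts along each leg is prescribed. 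Non-adjacency imposes no constraint beyond the embedding, because one can always pull legs back to destroy a spurious contact; this is the point where the "maximal L-graph" setup and interior-disjointness are used, and it means the only genuine constraints are (a) the prescribed contacts and their cyclic orders along each leg, and (b) strict inequalities keeping non-touching features apart, which are open conditions and hence survive small perturbations.

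The key observation is that the induced edge labeling is, by Theorem~\ref{thm:characterization}, a $2$-canonical order $v_1,v_2,\ldots,v_n$ of $G$: when $v_i$ is added it has exactly two earlier neighbors $x,y$, and in the representation $L_{v_i}$ makes its two contacts with $L_x$ and $L_y$. I would process the vertices in canonical order and build the equilateral representation incrementally, exactly mirroring the proof of Lemma~\ref{lem:L-is-2-canonical}, but now maintaining the invariant that (i) every placed L-shape is equilateral, and (ii) the outer staircase $S$ still exposes, for every not-yet-placed vertex, a "corner region" large enough to drop a sufficiently small equilateral L into. When $v_i$ arrives with earlier neighbors $x$ (say contacted by the top endpoint of $L_{v_i}$) and $y$ (contacted by the right endpoint of $L_{v_i}$), the bend of $L_{v_i}$ must be placed so that going up from it one hits the horizontal leg of $L_x$ at the prescribed spot and going right one hits the vertical leg of $L_y$; choosing the bend close enough to the reflex corner of the staircase between these two features, the two required leg lengths can be made as small and as nearly equal as we like — and then we simply take the common leg length to be the smaller of the two and shorten the longer leg, which only removes potential contacts, all of which we already wanted absent. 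One has to also verify that shortening never destroys a required contact: but a required contact of $L_{v_i}$ is with $L_x$ or $L_y$ via an endpoint of $L_{v_i}$, i.e. at the far tip of a leg, so we must instead argue the other way — the leg that we would shorten is the one whose tip contact we can still realize, and we arrange the bend position so that the binding (shorter) leg is the one carrying a mandatory contact while the longer leg's tip contact is scheduled to be made by a later vertex's L reaching toward it, not by $L_{v_i}$ reaching out. Keeping this bookkeeping straight for all four local configurations around $v_i$ in the edge labeling is the part that needs care.

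The main obstacle is precisely this last point: ensuring that the freedom to rescale one leg of each newly added L down to equilateral never conflicts with a contact that some already-placed or later-placed L-shape needs to make against that leg. I expect to handle it by a dimension/slack argument rather than a case analysis: set up the full linear program described in the first paragraph, observe that the original representation $\mathbf{L}$ is a strictly feasible point for all the strict (open) inequalities, add the $n$ affine equalities "$t_v - (\text{bend }y) = r_v - (\text{bend }x)$" (one per vertex, expressing equilaterality) one at a time, and show that feasibility is preserved because each new equality cuts the feasible polytope along a direction in which there is two-sided slack — moving $v$'s top endpoint down and compensating by moving all features above it down as well, a translation-of-a-sublevel move that the acyclicity condition (iii) of the edge labeling, equivalently the topological/canonical order, guarantees is available. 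Formally this is an induction on the canonical order: having made $v_1,\ldots,v_{i-1}$ equilateral while keeping all strict inequalities, we continuously deform the coordinates of $v_i$ and everything "later" than the contact feature on $v_i$'s long leg until that leg matches the short leg, staying inside the open region the whole time; since the earlier vertices' equilaterality constraints only involve their own, already-fixed relative coordinates, they are not disturbed. This yields an equivalent equilateral L-representation, completing the proof. \qed
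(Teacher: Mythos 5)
Your overall strategy (build the representation incrementally along the $2$-canonical order, as in Lemma~\ref{lem:L-is-2-canonical}) is the same as the paper's, but the step that actually forces equilaterality has a genuine gap. The crux is your ``make the two leg lengths nearly equal, then shorten the longer leg'' move, together with the claimed repair. In a $2$-canonical order the newly added vertex $v_i$ has exactly two earlier neighbours, and in the induced edge labeling \emph{both} of its edges are outgoing, one red and one blue: the top endpoint of $L_{v_i}$ must lie in the interior of $L_x$ and the right endpoint must lie in the interior of $L_y$. So both tips of $L_{v_i}$ carry mandatory contacts, and shortening either leg destroys a required contact (equivalently, reverses an edge of the labeling, so the result is no longer equivalent). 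Your proposed fix --- that the longer leg's tip contact is ``scheduled to be made by a later vertex's L reaching toward it'' --- is based on a false premise: a later vertex touching $L_{v_i}$ does so with \emph{its own} endpoint at an interior point of a leg of $L_{v_i}$, never at a tip of $L_{v_i}$; the tip contacts of $L_{v_i}$ can only be realized by $L_{v_i}$ itself. The fallback in your last paragraph does not close this gap either: the assertion that each equilaterality hyperplane meets the open feasible region ``with two-sided slack'' is exactly what needs proof, and a nonempty open polytope need not meet a prescribed affine hyperplane; the cascading ``move everything above it down'' deformation is not shown to be consistent or to preserve the equilaterality already achieved for $v_1,\dots,v_{i-1}$.

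What is missing is a reason why the legs can be made \emph{exactly} equal without any shortening, i.e., why the one-dimensional locus of bend positions giving an equilateral $L_{v_i}$ (with top endpoint on $s_u$ and right endpoint on $s_v$) always meets the free region. The paper supplies this via an invariant: a single fixed line $\ell$ of slope $-1$ that crosses every segment of the outer staircase in its interior. Then $\ell\cap s_u$ and $\ell\cap s_v$ are automatically the top and right endpoints of an equilateral L (two points on a slope $-1$ line span equal horizontal and vertical distances), and one inserts a slightly smaller equilateral L so that the two new staircase segments again cross $\ell$ in their interiors, preserving the invariant. Your observation that the bend can be chosen ``close to the reflex corner'' is pointing in the right direction --- since $s_u$ and $s_v$ share that corner, the equilateral locus is the diagonal through it and does meet the free quadrant --- but as written you never use this to get exact equality, and the argument you do give (shorten, or perturb via the LP) does not go through.
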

\begin{proof}
 For a given L-representation with base edge $v_1v_2$, consider the induced edge labeling and fix one corresponding $2$-canonical order $(v_1,v_2,\ldots,v_n)$. We construct an equivalent L-representation with equilateral L-shapes along this $2$-canonical order, i.e., by a variant of the algorithm given in Lemma~\ref{lem:L-is-2-canonical}. We maintain the following invariant:
  
 \medskip
 \textit{{\bf Invariant:} 
  There is a line $\ell$ of slope $-1$ that intersects every segment of the outer staircase in an interior point.
 }
 \medskip

 In the beginning we fix the line $\ell$ arbitrarily -- say $\ell = \{(r,-r) \,|\, r \in \mathbb{R}\}$. We keep $\ell$ fixed throughout the entire construction. In the base case one can easily define the L-shapes $L_{v_1}$ and $L_{v_2}$ so that all four legs intersect $\ell$ in an interior point -- say $L_{v_1}$ and $L_{v_2}$ have top endpoint $(1,2)$ and $(3,-1)$, respectively, and right endpoint $(4,-1)$ and $(5,-3)$, respectively; see Fig.~\ref{fig:equi-basecase}. In general we have an L-representation of $G_i = G[v_1,\ldots,v_i]$ in which the invariant is maintained.

 \begin{figure}[t!]
  \centering
  \subfigure[]{
   \includegraphics[width=.24\textwidth]{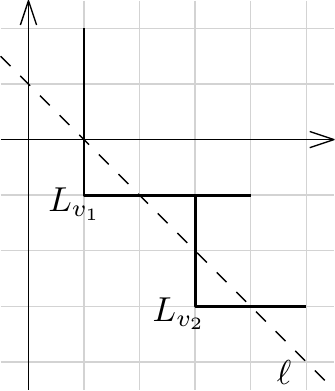}
   \label{fig:equi-basecase}
  }
  \hspace{1em}
  \subfigure[]{
   \includegraphics[width=.28\textwidth]{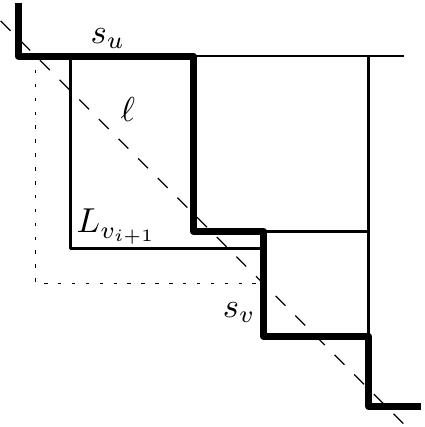}
   \label{fig:equi-step}
  }
  \hspace{0em}
  \subfigure[]{
 \includegraphics{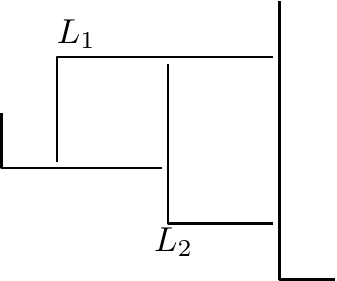}
 \label{fig:not-equilateral}
}
  \caption{\subref{fig:equi-basecase} The definition of $L_{v_1}$ and $L_{v_2}$. \subref{fig:equi-step} Introducing the L-shape for $v_{i+1}$ maintaining the invariant. \subref{fig:not-equilateral}
A contact L-representation with L-shapes in two different rotations without equivalent equilateral representation for both $L_1$ and $L_2$.}
  \label{fig:equi}
 \end{figure}

 Consider what happens when we insert a new L-shape for $v_{i+1}$. Let $u$ and $v$ be the two neighbors of $v_{i+1}$ in $G_{i+1}$. W.l.o.g. $u$ comes before $v$ when going counterclockwise around the outer face of $G_i$ starting at $v_1$. Let $s_u$ and $s_v$ be the horizontal segment and vertical segment of the outer staircase which are contained in $L_u$ and $L_v$, respectively. Note that by the invariant, if we would choose the points $\ell \cap s_u$ and $\ell \cap s_v$ as top and right endpoint of the newly inserted L-shape, then this would be equilateral. However, we do not insert $L_{v_{i+1}}$ exactly there as this would break the invariant. Instead, we insert a
 slightly smaller L-shape in such a way that the corresponding two new segments of the outer staircase intersect $\ell$ in the interior; see Fig.~\ref{fig:equi-step}.
\end{proof}

We remark that the equilateral L-representation constructed in Theorem~\ref{th:equiL} requires an exponential sized grid. Finding an equilateral L-representation on a polysize grid remains open. Further we remark that with more than one of the four possible rotations in an L-representation, it is no longer true that every L-representation has an equivalent equilateral one. Consider the L-representation in Fig.~\ref{fig:not-equilateral}: in every equivalent representation the horizontal leg of $L_1$ is longer than the horizontal leg of $L_2$ and the vertical leg of $L_1$ is shorter than the vertical leg of $L_2$. Thus $L_1$ and $L_2$ cannot be both equilateral.

For a maximally planar graph $G$ with Schnyder realizer $(S_1,S_2,S_n)$ and an inner vertex $v$ we define $\sigma_i(v)$ to be the outgoing neighbor of $v$ in $S_i$, $i=1,2,n$. For convenience, let $\sigma_n(v_1) = \sigma_n(v_2) = \sigma_n(v_n) = v_{n+1}$ for a dummy vertex $v_{n+1} \notin V(G)$.

\begin{definition}[cuboid representation]\label{def:cuboids}
 Let $G = (V,E)$ be a maximally planar graph, $(S_1,S_2,S_n)$ a Schnyder realizer of $G$, $\{L_v \,|\, v \neq v_n\}$ an L-representation of $G \setminus S_n$, and $h(v)$ a number for every vertex $v \in V \cup v_{n+1}$. For $v \neq v_n$ let $(x^r_v,y^r_v)$ and $(x^t_v,y^t_v)$ be the right and top endpoint of $L_v$, respectively. Define an L-shape $L_{v_n}$ with right endpoint $(x^r_{v_n},y^r_{v_n}) := (x^t_{v_2},y^r_{v_2})$ and top endpoint $(x^t_{v_n},y^t_{v_n}) := (x^t_{v_1},y^r_{v_1})$. Then for every $v \in V$ its \emph{cuboid} is defined as:
 $$Q_{v} := [x^t_v,x^r_v] \times [y^r_v,y^t_v] \times [h(\sigma_n(v)),h(v)]$$
\end{definition}

Note that for any $v$ the projection of $Q_v$ onto the $xy$-plane gives a rectangle, two sides of which form the L-shape $L_v$. The number $h(v)$ corresponds to the ``height'', i.e., $z$-coordinate, of the top side of the cuboid $Q_v$; see Fig.~\ref{fig:cuboids}. A \emph{cuboid representation} of a graph $G$ is a set of interior disjoint cuboids, one for each vertex, so that two cuboids intersect exactly if the corresponding vertices are adjacent in $G$. A cuboid representation is \emph{proper} if every non-empty intersection of two cuboids is a $2$-dimensional rectangle.


\begin{proposition}\label{prop:cuboids}
 The cuboids given by Definition~\ref{def:cuboids} form a cuboid representation of $G$ whenever $h(v_{n+1}) < h(v_n)$ and for every inner vertex $v$ of $G$ we have
 \begin{equation}
  h(\sigma_1(v)) \geq h(v) \quad\text{and}\quad h(\sigma_2(v)) \geq h(v) \quad\text{and}\quad h(\sigma_n(v)) < h(v).\label{eq:h-values}
 \end{equation}
Further, a non-degenerate L-representation implies a proper cuboid representation.
\end{proposition}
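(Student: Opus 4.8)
The plan is to push the given L-representation of $G\setminus S_n$ up into the third dimension, the $z$-direction carrying the green tree. First I would set up notation: let $\pi\colon\mathbb R^{3}\to\mathbb R^{2}$ forget the $z$-coordinate, so that $\pi(Q_v)=R_v:=[x^t_v,x^r_v]\times[y^r_v,y^t_v]$ is, for $v\neq v_n$, exactly the rectangle two of whose sides form $L_v$ (with bend $(x^t_v,y^r_v)$), and $\mathrm{int}(Q_v)=\mathrm{int}(R_v)\times(h(\sigma_n(v)),h(v))$ is a nondegenerate open box by~\eqref{eq:h-values} and $h(v_{n+1})<h(v_n)$. I would then reduce the proposition to four checks: (a) $Q_u\cap Q_v\neq\emptyset$ for every edge $uv$ of $G$; (b) $Q_u\cap Q_v=\emptyset$ for every non-edge; (c) the $\mathrm{int}(Q_v)$ are pairwise disjoint; and (d) for the last sentence, that every non-empty $Q_u\cap Q_v$ is $2$-dimensional when the L-representation is non-degenerate. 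All four are attacked by splitting on the Schnyder colour of $uv$, using that $G\setminus S_n$ carries the edge labeling induced by $(S_1,S_2)$, that $S_1\cup S_2\cup S_n^{-1}$ is acyclic, and that~\eqref{eq:h-values} makes $h$ non-decreasing along every edge of $S_1\cup S_2\cup S_n^{-1}$, strictly so along the edges of $S_n^{-1}$.

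\textbf{Red and blue edges.} For $uv\in S_1$ with $v=\sigma_1(u)$ (the $S_2$ case being symmetric), $uv$ is an edge of $G\setminus S_n$, so the top endpoint of $L_u$ lies in the interior of $L_v$, hence on its horizontal leg; thus $y^t_u=y^r_v$ and $x^t_v<x^t_u<x^r_v$, so $R_u$ and $R_v$ lie on opposite sides of $y=y^r_v$ and meet in the nondegenerate segment $\Sigma:=[x^t_u,\min(x^r_u,x^r_v)]\times\{y^r_v\}$; consequently $Q_u\cap Q_v=\Sigma\times\bigl([h(\sigma_n(u)),h(u)]\cap[h(\sigma_n(v)),h(v)]\bigr)$. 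I then have to see the two $z$-intervals overlap: since $h(u)\le h(\sigma_1(u))=h(v)$ and $h(\sigma_n(u))<h(u)$ by~\eqref{eq:h-values}, it suffices that $h(\sigma_n(v))\le h(u)$. To get this I would use the \emph{fan at $v$}: by the Schnyder rule the incoming red edges at $v$ are consecutive (clockwise) between the outgoing blue and outgoing green edges of $v$, so, $G$ being a triangulation, their other endpoints together with $\sigma_2(v)$ and $\sigma_n(v)$ form a sequence $\sigma_2(v),u_1,\dots,u_k,\sigma_n(v)$ of consecutive neighbours of $v$ in which consecutive members are adjacent, $u=u_j$. A short inspection of the local rotation at $u_i$ and at $u_k$ shows that the edge between two consecutive members of this sequence is always oriented, in $S_1\cup S_2\cup S_n^{-1}$, towards the earlier member; concatenating gives a directed path $\sigma_n(v)\to u_k\to\cdots\to u_j=u$ in $S_1\cup S_2\cup S_n^{-1}$, whence $h(\sigma_n(v))\le h(u)$. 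So $Q_u\cap Q_v\neq\emptyset$, and if the representation is non-degenerate (and $h$ is taken strictly monotone on all of $S_1\cup S_2\cup S_n^{-1}$, which one may always arrange) the $z$-overlap is a proper interval and $Q_u\cap Q_v$ is a $2$-rectangle.

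\textbf{Green edges; non-edges; disjointness.} For $uv\in S_n$ with $v=\sigma_n(u)$, the $z$-ranges $[h(v),h(u)]$ and $[h(\sigma_n(v)),h(v)]$ meet only in $\{h(v)\}$, so $Q_u\cap Q_v=(R_u\cap R_v)\times\{h(v)\}$ and I must show $R_u\cap R_v\neq\emptyset$. Passing to the T-representation of $H$ from Lemma~\ref{lem:L-is-2/3-Schnyder}, the left-leg tip of $T_u$ lies in the interior of the vertical leg of $L_v$, giving $x^t_v<x^t_u$ and $y^r_v<y^r_u<y^t_v$; I would then derive $x^t_u\le x^r_v$ from the plane embedding and the place of $u$ in the list of green children of $v$ (which stack along the vertical leg of $L_v$, from $\sigma_1(v)$ at the top to $\sigma_2(v)$ at the bottom), so that the bend of $L_u$ lies in $R_v$ (interiorly when the representation is non-degenerate). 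For (b) and (c) I would establish the dichotomy that, for $u\neq v$, $\mathrm{int}(R_u)\cap\mathrm{int}(R_v)\neq\emptyset$ only if one of $u,v$ is a proper descendant of the other in the green tree. Granting it: when it fails, $R_u\cap R_v$ lies in the shared boundary of the two rectangles, and any common point of $Q_u$ and $Q_v$ would exhibit a red or blue contact, i.e.\ an edge---so $Q_u\cap Q_v=\emptyset$ for such a non-edge and the interiors are disjoint. When $u$ is a proper green descendant of $v$, the green $u$--$v$ path has length $\ell\ge 1$; if $\ell\ge 2$ then $uv\notin S_n$ and $h(v)<h(\sigma_n(u))$ by~\eqref{eq:h-values}, so the closed $z$-intervals of $Q_u$ and $Q_v$ are disjoint and $Q_u\cap Q_v=\emptyset$, while if $\ell=1$ the cuboids meet only in $z=h(v)$; either way $\mathrm{int}(Q_u)\cap\mathrm{int}(Q_v)=\emptyset$. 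This gives (b)--(d).

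\textbf{The main obstacle.} The combinatorial pieces---the fan argument and the $z$-separation from monotonicity of $h$ along the green tree---are routine once set up. The real content is geometric: (i) that for a green edge $u\to\sigma_n(u)$ the bend of $L_u$ lies in $R_{\sigma_n(u)}$ (the inequality $x^t_u\le x^r_v$ above), and (ii) the dichotomy controlling when two of the rectangles $R_v$ overlap in the plane, equivalently that two cuboids can touch only along a horizontal interface (forcing a green edge) or a vertical wall (forcing a red or blue edge). I expect to prove both by induction along a $2$-canonical order of $G\setminus S_n$, tracking which earlier rectangles each newly inserted L-shape meets, or directly by following green paths and showing that the bend of a vertex stays inside the rectangle of each of its green ancestors; the vertices $v_1,v_2,v_n$ and the dummy $v_{n+1}$ need only minor separate care.
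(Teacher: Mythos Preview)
Your proposal is correct and follows essentially the same route as the paper: case-split on the Schnyder colour, use monotonicity of $h$ along $S_1\cup S_2\cup S_n^{-1}$ (your fan argument is the explicit version of the paper's one-line assertion that a directed path links $u$ and $\sigma_n(v)$), and separate non-adjacent cuboids either by a vertical plane when $R_u,R_v$ do not overlap or by a horizontal one via the green ancestor--descendant relation when they do. The paper simply asserts the two geometric facts you flag as the ``main obstacle'' (that $R_u\cap R_{\sigma_n(u)}\neq\emptyset$ and the overlap dichotomy), so where your write-up differs at all it is more, not less, complete; the only quibble is your parenthetical that $h$ ``may always be arranged'' to be strictly monotone, which is out of place since $h$ is given in the hypothesis---but the paper does not address that point either.
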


\begin{proof}
Note that conditions~\eqref{eq:h-values} imply that along the edges of $S_1 \cup S_2 \cup S_n^{-1}$ the $h$-values are non-decreasing.
%
 It is easy to show that the cuboids for the outer three vertices are mutually touching with proper side contacts. So let $uv$ be an inner edge of $G$. First assume $v = \sigma_i(u)$, i.e., $uv \in S_i$, for some $i \in \{1,2\}$. Looking at the L-representation we see that projecting $Q_u$ and $Q_v$ onto the $xy$-plane gives two rectangles with non-empty intersection or a proper side contact in the non-degenerate case, which is horizontal if $i=1$ and vertical if $i=2$. Projecting $Q_u$ and $Q_v$ onto the $z$-axis gives intervals $[h(\sigma_n(u)),h(u)]$ and $[h(\sigma_n(v)),h(v)]$, respectively. Since there is a directed path from $u$ to $\sigma_n(v)$ in $S_1 \cup S_2 \cup S_n^{-1}$ we get from~\eqref{eq:h-values} that $h(\sigma_n(v)) > h(u) \geq h(v)$. Thus $Q_u$ and $Q_v$ overlap non-trivially.
 
 Next assume $v = \sigma_n(u)$, i.e., $uv \in S_n$. Looking at the L-representation we see that projecting $Q_u$ and $Q_v$ onto the $xy$-plane gives two rectangles that intersect or overlap non-trivially in the non-degenerate case. Projecting $Q_u$ and $Q_v$ onto the $z$-axis gives intervals $[h(\sigma_n(u)),h(u)] = [h(v),h(u)]$ and $[h(\sigma_n(v)),h(v)]$, respectively. Thus $Q_u \cap Q_v \neq \emptyset$ or is a rectangle parallel to the $xy$-plane in the non-degenerate case.
  
 Finally let $u$ and $v$ be non-adjacent. If the rectangles defined by $L_u$ and $L_v$ do not overlap, i.e., can be separated by a horizontal or vertical line, then in $3$-space $Q_u$ and $Q_v$ are separated by a plane parallel to the $yz$-plane or $xz$-plane. If the rectangles do overlap, there is a path on at least two edges in $S_n$ starting and ending in $u$ and $v$, respectively. From~\eqref{eq:h-values} and the definition of the $z$-component of cuboids follows that $Q_u$ and $Q_v$ can separated by a plane parallel to the $xy$-plane. 
\end{proof}

\begin{theorem}\label{thm:square-based}
 Planar graphs have proper contact representation by square-based cuboids.
\end{theorem}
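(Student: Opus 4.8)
The plan is to combine the machinery already assembled: every planar graph embeds into a maximally planar graph $H$ (add a dummy vertex in each face, or simply triangulate the embedding), so it suffices to find a proper contact representation by square-based cuboids for maximally planar $H$. Fix a Schnyder realizer $(S_1,S_2,S_n)$ of $H$; then $H \setminus S_n$ is an L-graph by Theorem~\ref{thm:characterization}, and by Theorem~\ref{th:equiL} we may take a \emph{non-degenerate equilateral} L-representation of $H \setminus S_n$. (Non-degeneracy can be arranged by perturbing endpoints slightly, or is already built into the construction of Theorem~\ref{th:equiL}; equilaterality is the key gain.) Feed this L-representation into Definition~\ref{def:cuboids}. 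Every cuboid $Q_v$ then has a \emph{square} base, since its base is the rectangle $[x^t_v,x^r_v]\times[y^r_v,y^t_v]$ whose two bounding sides form the equilateral L-shape $L_v$, so the two side lengths $x^r_v-x^t_v$ and $y^t_v-y^r_v$ are equal. (One must also check the dummy L-shape $L_{v_n}$ introduced in Definition~\ref{def:cuboids} has a square base; this follows from the explicit base-case coordinates chosen in Theorem~\ref{th:equiL}, or $L_{v_n}$ can be set up directly as equilateral.)

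It remains to choose the height function $h$ so that Proposition~\ref{prop:cuboids} applies. The conditions~\eqref{eq:h-values} say exactly that $h$ is non-decreasing along every edge of $S_1 \cup S_2 \cup S_n^{-1}$ and strictly decreasing along $S_n$; since $S_1 \cup S_2 \cup S_n^{-1}$ is acyclic (stated in the Preliminaries), we can take a topological order of it and let $h(v)$ be the position of $v$ in that order, then give $v_{n+1}$ a value smaller than $h(v_n)$. The strict inequality $h(\sigma_n(v)) < h(v)$ holds because in the topological order of $S_1 \cup S_2 \cup S_n^{-1}$ the reversed edge puts $\sigma_n(v)$ before $v$. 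With this $h$, Proposition~\ref{prop:cuboids} guarantees a cuboid representation of $H$, and moreover a \emph{proper} one because we started from a non-degenerate L-representation. Restricting to the cuboids of the original vertices of $G$ (and noting that deleting cuboids of the dummy vertices only removes contacts, never creates them) yields the desired proper square-based cuboid representation of $G$.

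The main obstacle I anticipate is the bookkeeping around degeneracy and the dummy vertex $v_n$ (and the base vertices $v_1,v_2$): one has to be sure that the equilateral L-representation supplied by Theorem~\ref{th:equiL} is genuinely non-degenerate in the sense required by Proposition~\ref{prop:cuboids} (no endpoint coinciding with another endpoint or bend), and that the auxiliary L-shape $L_{v_n}$ fits into this picture with a square base. A secondary point is confirming that passing from $G$ to a maximally planar supergraph $H$ and back does not break properness — deleting a cuboid cannot turn a point/edge contact into an area contact, so this direction is safe, but one should state it explicitly. Everything else is a direct application of the already-established Theorems~\ref{thm:characterization}, \ref{th:equiL} and Proposition~\ref{prop:cuboids}.
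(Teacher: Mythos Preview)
Your proposal is correct and follows essentially the same approach as the paper. The paper also reduces to a maximally planar supergraph, fixes a Schnyder realizer, invokes Theorem~\ref{th:equiL} for a non-degenerate equilateral L-representation of $H\setminus S_n$, and then applies Proposition~\ref{prop:cuboids}; for the heights it sets $h(v_i)=-i$ along a canonical order with respect to $S_1,S_2$, which by Lemma~\ref{lem:Schnyder-canonical} is exactly a topological order of $S_1\cup S_2\cup S_n^{-1}$, so your choice of $h$ is the same up to sign.
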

\begin{proof}
 As every planar graph is an induced subgraph of some maximally planar graph we may assume w.l.o.g. that $G = (V,E)$ is a maximally planar graph. We fix any Schnyder realizer $(S_1,S_2,S_n)$ of $G$, consider any non-degenerate equilateral L-re\-pre\-sen\-ta\-tion of $G \setminus S_n$, which exists by Theorem~\ref{th:equiL}. Further we let $(v_1,v_2,\ldots,v_n)$ be any canonical order of $G$ w.r.t. $S_1,S_2$ and define $h(v_i) = -i$ for $i=1,\ldots,n$ and $h(v_{n+1}) = -(n+1)$. Clearly,~\eqref{eq:h-values} holds for these $h$-values. Hence by Proposition~\ref{prop:cuboids} the cuboids given by Definition~\ref{def:cuboids} form a proper cuboid representation of $G$, and since the L-representation is equilateral every cuboid has a square base.
\end{proof}
 
We remark that a square-based cuboid representation can be found efficiently with an iterative approach, when the L-representation and the cuboids are defined along a single sweep of the chosen canonical order. This approach is illustrated in Fig.~\ref{fig:cuboids}.

\begin{figure}[t!]
 \centering
 \subfigure[]{
  \includegraphics{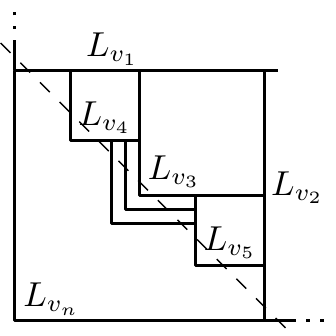}
  \label{fig:cuboids-L}
 }
 \subfigure[]{
  \includegraphics{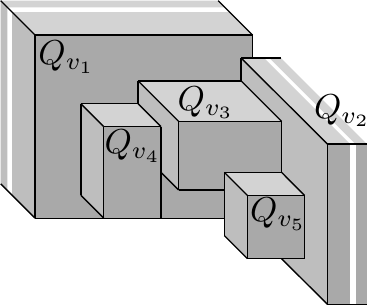}
  \label{fig:cuboids-3D-partial}
 }
 \subfigure[]{
  \includegraphics{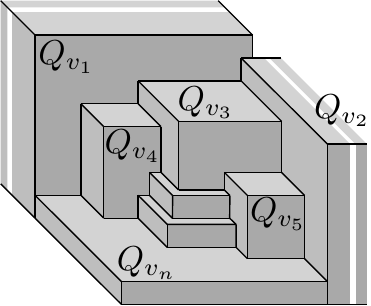}
  \label{fig:cuboids-3D-final}
 }
 \caption{\subref{fig:cuboids-L} An equilateral L-representation of $G \setminus S_n$ together with an L-shape for the vertex $v_n$. \subref{fig:cuboids-3D-partial}--\subref{fig:cuboids-3D-final} The cuboids can be defined along a canonical order w.r.t. $S_1,S_2$: The projection of each $Q_v$ onto the $xy$-plane is a rectangle spanned by $L_v$. The maximum and minimum $z$-coordinate of $Q_v$ is given by (the negative of) the index in the canonical order of $v$ and $\sigma_n(v)$, respectively.}
 \label{fig:cuboids}
\end{figure}

Next we address the question when the cuboids from Definition~\ref{def:cuboids} are actually cubes. This is clearly the case exactly if the chosen L-representation is equilateral and for every vertex $v$ we set $h(v) = h(\sigma_n(v)) + |L_v|$, where $|L_v|$ is the length of a leg of $L_v$. For a given equilateral L-representation we call this set of $h$-values the \emph{cubic heights}. We remark that in any L-representation we can choose the vertical leg of $L_{v_1}$ and the horizontal leg of $L_{v_2}$ (keeping the rest unchanged), so that $L_{v_1}$ and $L_{v_2}$ are equilateral. The cubic heights clearly satisfy $h(\sigma_n(v)) < h(v)$, but in general~\eqref{eq:h-values} is not satisfied and we are not guaranteed by Proposition~\ref{prop:cuboids} to obtain a cuboid representation. However, as we show next we can sometimes choose the equilateral L-representation (and implicitly the Schnyder realizer) more carefully so that~\eqref{eq:h-values} is satisfied for the cubic heights.

Consider a fixed L-representation and let $P$ be the set of all endpoints and bends of L-shapes. 
For a vertex $v$ let $\ell_v$ be the line through the top and right endpoint of $L_v$. A \emph{segment $s$ of an L-shape $L_v$} is a connected component of $L_v \setminus P$, i.e., $s \subset L_v$, each endpoint of $s$ is a point from $P$ and no further point from $P$ is contained in $s$. Let $C \subset P$ be the set of contact points between any two L-shapes. We call an L-representation \emph{Square-L, or SL-representation} if for every $p \in C$ the vertical segment whose right end is $p$ and the horizontal segment whose top end is $p$ have the same length; see Fig.~\ref{fig:T-equilateral}.

\begin{lemma}\label{lem:overconstrained}
 Consider a maximally planar graph $G$, a Schnyder realizer $(S_1,S_2,S_n)$, and an SL-representation of $G\setminus S_n$. Then for every $v \in V(G)$ the line $\ell_v$ has slope $-1$ and contains the bends of L-shapes corresponding to vertices $w$ with $\sigma_n(w) = v$.
\end{lemma}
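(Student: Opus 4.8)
The plan is to analyze the structure imposed by the SL-representation on the L-shapes corresponding to a fixed vertex $v$ and its ``$S_n$-children'', i.e. the set $W_v := \{w \mid \sigma_n(w) = v\}$. Recall from the T-contact machinery (Lemma~\ref{lem:L-is-2/3-Schnyder} and its proof) that adding a left leg to each $L_w$ produces a T-representation realizing the full Schnyder realizer $(S_1,S_2,S_n)$; in that picture the green edge $wv \in S_n$ means precisely that the \emph{bend} of $L_w$ (the left endpoint of the T-shape $T_w$, which in the L-picture is the corner of $L_w$) lies on the vertical leg of $L_v$. So the first step is to make this explicit: for each $w$ with $\sigma_n(w) = v$, the bend $b_w$ of $L_w$ lies on the vertical leg of $L_v$. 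Dually, the top endpoint of $L_w$ lies in $L_{\sigma_1(w)}$ and the right endpoint of $L_w$ lies in $L_{\sigma_2(w)}$, but we only need the green relation here.

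Next I would pin down the geometry of a single L-shape in an SL-representation. The key observation is that for a point $p \in C$ that is the top endpoint of $L_w$, the SL-condition says the horizontal segment of $L_w$ ending at its bend, call its length $a$, equals the length of the vertical segment (of the L-shape $L_{\sigma_1(w)}$ receiving the contact) whose top end is $p$ — but more usefully, since $L_w$ itself is equilateral (SL-representations are in particular required to have, at each contact, matching segment lengths, and one checks the two legs of each $L_w$ are forced equal), the two legs of $L_w$ have equal length, hence the top endpoint $t_w$, the right endpoint $r_w$, and the bend $b_w$ satisfy: $b_w$ is the corner, $t_w$ is directly above $b_w$ at distance $|L_w|$, and $r_w$ is directly right of $b_w$ at distance $|L_w|$. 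Therefore $t_w$ and $r_w$ both lie on the line of slope $-1$ through $b_w$. This already shows $\ell_w$ has slope $-1$: it is the line through $t_w$ and $r_w$, which is exactly the slope-$-1$ line through the bend $b_w$.

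Now I would assemble these facts for the fixed vertex $v$. The line $\ell_v$ is by definition the line through $t_v$ and $r_v$; by the previous paragraph it has slope $-1$ and passes through $b_v$ — so $\ell_v = \{b_v + s(1,-1) : s \in \mathbb{R}\}$, and in particular it contains the entire vertical leg's ``diagonal reach''. It remains to show $\ell_v$ contains $b_w$ for every $w$ with $\sigma_n(w) = v$. From step one, $b_w$ lies on the vertical leg of $L_v$, i.e. $b_w = (x^{\,t}_v, y_w)$ for some $y_w$ with $y^{\,r}_v \le y_w \le y^{\,t}_v$. I need $b_w \in \ell_v$. The point of the SL-condition enters precisely here: at the contact point $p$ where $b_w$ meets the vertical leg of $L_v$, $p$ is the top end of some horizontal segment of $L_v$ and also... wait — more carefully, $b_w$ is an interior point of the vertical leg of $L_v$, and $P$ contains it, so it subdivides the vertical leg of $L_v$ into segments; the segment of $L_v$ immediately below $b_w$ is a vertical segment whose \emph{top} end is the contact point $b_w$, and the SL-condition forces its length to equal the horizontal segment of $L_w$ whose right... this needs the bookkeeping of matching up which horizontal segment. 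The cleanest route is induction along a $2$-canonical order / along $S_n^{-1}$: process vertices in an order where $v$ comes before all its $S_n$-children, maintain the invariant that each already-placed $L_v$ has $\ell_v$ of slope $-1$ through its bend and through the bends of its already-placed children, and when inserting $L_w$ with $\sigma_n(w) = v$, use the SL-matching at the contact of $b_w$ with $L_v$ to force $b_w$ onto $\ell_v$. Concretely: just below $b_w$ on $L_v$ the vertical segment has some length $c$; the SL-condition at the contact point just below (the next point of $P$) relates $c$ to the horizontal leg length of $L_w$, and chasing this equality together with $|L_w|$ being split by its bend shows $b_w$ sits exactly where $\ell_v$ crosses the vertical leg.

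The main obstacle I anticipate is the last step — the segment-length bookkeeping along the vertical leg of $L_v$. One must be careful that the SL-condition is stated only for contact points $p \in C$ (matching the vertical segment with top end $p$ to the horizontal segment with top end $p$), so to propagate ``$b_w \in \ell_v$'' one cannot just look at $b_w$ in isolation but must track the cumulative lengths of the segments of $L_v$ between consecutive children's bends, and show inductively (from the bend $b_v$ upward, or equivalently using that $t_v \in \ell_v$) that every subdivision point lands on $\ell_v$. This is where the ``over-constrained'' nature hinted at by the lemma's name does the work: each SL-equation at a contact forces one more point onto the common slope-$-1$ line, and because $L_v$ is itself equilateral the two ends $t_v$, $r_v$ of $\ell_v$ are consistent with $b_v$, closing the induction. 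Once that is in place, the statement follows: $\ell_v$ has slope $-1$ and contains all bends $b_w$ with $\sigma_n(w) = v$.
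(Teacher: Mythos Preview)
Your proposal rests on a geometric misreading that makes the argument unworkable. You assert that for $w$ with $\sigma_n(w)=v$ the bend $b_w$ of $L_w$ lies on the vertical leg of $L_v$, identifying $b_w$ with ``the left endpoint of the T-shape $T_w$''. But these are different points: the bend of $L_w$ is the corner $(x^t_w,y^r_w)$, whereas the left endpoint of $T_w$ is the tip of the \emph{added} left leg, strictly to the left of that corner. In the L-representation of $G\setminus S_n$ the green edges are absent, so $L_w$ and $L_v$ do not touch at all; in particular $b_w$ is not on $L_v$. All of your subsequent bookkeeping along the vertical leg of $L_v$ is therefore aimed at the wrong object.

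A second problem is circularity: you argue that $\ell_w$ has slope $-1$ by first asserting that $L_w$ is equilateral. Equilaterality of the L-shapes is a \emph{consequence} of this lemma (see the corollary that follows), not part of the SL-hypothesis, so you cannot use it here. The paper's proof avoids both issues by looking at the right structure: the staircase $S_v$ that runs from the top endpoint of $L_v$ to its right endpoint and passes through the bends of exactly those $L_w$ with $\sigma_n(w)=v$. The segments $s_1,\dots,s_{2k}$ of this staircase alternate horizontal/vertical and meet only at contact points in $C$ and at the bends $b_w$; the SL-condition pairs them so that $|s_{2i-1}|=|s_{2i}|$ for each $i$, which forces every convex corner of the staircase (i.e., every $b_w$) onto the line through the two ends of $S_v$, namely $\ell_v$, and simultaneously forces that line to have slope $-1$. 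Your inductive instinct is fine, but it must be run along $S_v$, not along the vertical leg of $L_v$.
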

\begin{proof}
 Consider any vertex $v \neq v_1,v_2$ and the corresponding L-shape $L_v$. Let $S_v$ be the staircase that connects the top and right endpoint of $L_v$ and contains the bends of L-shapes corresponding to vertices $w$ with $\sigma_n(w) = v$. If $s_1,\ldots,s_{2k}$ are the segments along $S_v$, then by assumption $s_{2i-1}$ and $s_{2i}$ are of the same length, $i=1,\ldots,k$. Equivalently, all bends on $S_v$ lie on $\ell_v$, and $\ell_v$ has slope $-1$.
\end{proof}

\begin{corollary}\label{cor:overconstrained}
 Let $\{L_v \;|\; v \in V\}$ be an SL-representation. Then it is equilateral and $\{\Delta_v := {\rm conv}(L_v) \;|\; v \in V\}$ is a homothetic triangle representation of $G$. Further, the cubic heights satisfy~\eqref{eq:h-values} and Proposition~\ref{prop:cuboids} yields a contact cube representation of $G$.
\end{corollary}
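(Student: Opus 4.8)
The plan is to unpack the three claims of the corollary in sequence, building each on the previous. First I would establish that an SL-representation is equilateral: by Lemma~\ref{lem:overconstrained}, for every vertex $v$ the line $\ell_v$ through the top and right endpoint of $L_v$ has slope $-1$, which is exactly the statement that the horizontal and vertical legs of $L_v$ have equal length. Hence every L-shape is equilateral and the representation is an equilateral L-representation in the sense of Section~\ref{sec:equilateral}.

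Next I would show that $\{\Delta_v := \mathrm{conv}(L_v) \mid v \in V\}$ is a homothetic triangle representation. Since each $L_v$ is equilateral with legs parallel to the axes, its convex hull $\Delta_v$ is an isosceles right triangle with horizontal and vertical legs and hypotenuse of slope $-1$; all such triangles are homothetic, so the homothety condition is immediate. The work is in checking that $\{\Delta_v\}$ is a \emph{contact} representation of the maximally planar graph $G$ (not just of $G \setminus S_n$). For edges $uv \in S_1 \cup S_2$, the contact point of $L_u$ and $L_v$ in the L-representation is a boundary point of both $\Delta_u$ and $\Delta_v$, and one checks it is a proper corner-to-side touching: the top (resp.\ right) endpoint of $L_u$ lies on the horizontal (resp.\ vertical) leg of $L_v$, which is a side of $\Delta_v$. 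For edges $uv \in S_n$, the point to exploit is the second half of Lemma~\ref{lem:overconstrained}: the bend of $L_w$ lies on $\ell_w$'s target line $\ell_v$ exactly when $\sigma_n(w) = v$, so the bend of $L_u$ (with $\sigma_n(u) = v$) lies on the hypotenuse of $\Delta_v$; since the bend of $L_u$ is a corner of $\Delta_u$, this gives the required corner-to-side contact realizing the green edge $uv$. Conversely, one must argue no two triangles $\Delta_u, \Delta_v$ with $uv \notin E$ intersect, and that triangles have disjoint interiors — this follows because the L-shapes already partition the plane appropriately and taking convex hulls only fills in the "staircase pockets," each of which is assigned to a unique vertex by the Schnyder structure (the analysis mirrors the non-adjacency argument in the proof of Proposition~\ref{prop:cuboids}, using paths in $S_n$).

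For the last two claims I would invoke the machinery already set up. With all $\ell_v$ of slope $-1$, the cubic heights $h(v) = h(\sigma_n(v)) + |L_v|$ become geometrically meaningful: I claim they satisfy~\eqref{eq:h-values}. The condition $h(\sigma_n(v)) < h(v)$ is automatic since $|L_v| > 0$. For $h(\sigma_1(v)) \geq h(v)$ and $h(\sigma_2(v)) \geq h(v)$, the key observation is that because every relevant line has slope $-1$, the quantity $h(v)$ can be read off as (a constant plus) the signed distance of $\ell_v$ from a fixed slope-$-1$ reference line — more precisely, $h(v)$ equals (up to a global additive and multiplicative constant) the value $x^t_v + y^t_v = x^r_v + y^r_v$ measured along the common normal direction. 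When $uv \in S_i$ for $i \in \{1,2\}$, the corner of $L_u$ lies inside $\Delta_v$, hence on the $\ell_v$-side of $\ell_u$, which translates directly into $h(\sigma_i(u)) \geq h(u)$. Once~\eqref{eq:h-values} holds for the cubic heights, Proposition~\ref{prop:cuboids} applies verbatim and yields a cuboid representation in which, by the definition of cubic heights, every cuboid $Q_v$ has all three dimensions equal to $|L_v|$ — i.e., is a cube.

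The main obstacle I anticipate is the bookkeeping in the second step: verifying that taking convex hulls of the L-shapes produces no spurious intersections and that each staircase pocket is "absorbed" into exactly one triangle (the one whose hypotenuse bounds it), so that the edge set of the triangle contact graph is precisely $E(G)$ rather than a proper supergraph. This requires carefully tracking, for each face of the embedding of $G$, how its bounding staircase in the L-representation is partitioned among the incident triangles — essentially re-running the face-by-face analysis underlying Lemma~\ref{lem:L-is-2/3-Schnyder} and Lemma~\ref{lem:overconstrained}. Everything else is a direct translation of slope-$-1$ geometry into the inequalities~\eqref{eq:h-values}.
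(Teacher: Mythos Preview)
Your proposal is correct and follows essentially the same route as the paper's proof. In particular, your identification of $h(v)$ (up to an additive constant) with the coordinate $x^t_v + y^t_v$ of the slope-$(-1)$ line $\ell_v$ is exactly what the paper obtains by telescoping $h(v) = h(\sigma_n(v)) + |L_v|$ along the $S_n$-path to $v_n$ and rewriting the resulting sum as $\sqrt{2}$ times the distance $d(\ell_{v_n},\ell_v)$; both arguments then finish by the same geometric observation that $\ell_{\sigma_i(v)}$ lies no closer to $\ell_{v_n}$ than $\ell_v$ does. The paper also dispatches your anticipated ``main obstacle'' (interior-disjointness and absence of spurious triangle contacts) in a single sentence, so that step does not in fact require the face-by-face bookkeeping you outline.
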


\begin{proof}
 Since $\ell_v$ has slope $-1$ and contains both endpoints of $L_v$, $L_v$ is equilateral for every $v$.
 For every vertex $v$ the sides of $\Delta_v$ are formed by $L_v$ and $\ell_v$. Since $\ell_v$ contains the bend of $L_w$ for every $w$ with $\sigma_n(w) = v$, $\Delta_w$ and $\Delta_v$ touch in a unique point. Moreover, any two triangles are interior disjoint and for any $v \neq w$ the top (right) corner of $\Delta_w$ touches $\Delta_v$ if $\sigma_1(w) = v$ ($\sigma_2(w) = v$). Thus $\{\Delta_v \;|\; v \in V(G)\}$ is a triangle representation of $G$ and since the L-shapes are homothetic, so are the triangles. See Fig.~\ref{fig:T-homothetics} for an example.

 Finally, we consider the cubic heights, i.e., $h(v) = h(\sigma_n(v)) + |L_v|$, and show that for any inner vertex $v$ of $G$~\eqref{eq:h-values} is satisfied. Consider for any inner vertex $v$ the path $P(v)$ in $S_n$ from $v$ to $v_n$. Then $h(v) = h(v_n) + \sum_{w \in V(P(v))} |L_w|$. On the other hand the distance between $\ell_{v_n}$ and $\ell_v$ is exactly $\frac{1}{\sqrt{2}}\sum_{w \in V(P(v)) \setminus v_n} |L_w|$. Since for $i=1,2$ we have that $\ell_v$ is closer to $\ell_{v_n}$ than $\ell_{\sigma_i(v)}$ it follows $h(\sigma_i(v)) \geq h(v)$. With $h(\sigma_n(v)) = h(v) - |L_v| < h(v)$ we conclude that~\eqref{eq:h-values} is satisfied.
\end{proof}

Not every L-representation has an equivalent SL-representation, since not every planar graph admits a homothetic triangle representation. But homothetic triangle representations exist for $4$-connected maximally planar graphs (Theorem~\ref{thm:homothetic-triangles}) and planar $3$-trees. 
Interestingly, only very few Schnyder realizers correspond to these representations -- in fact we do not know a graph that admits homothetic triangle representations for two distinct Schnyder realizers. Additionally, for a fixed Schnyder realizer there is at most one homothetic triangle representation.
Felsner and Francis~\cite{Felsner11} observe that from Theorem~\ref{thm:homothetic-triangles} one obtains a cube representation for every planar graph. However, the only known proof of Theorem~\ref{thm:homothetic-triangles} relies on Schramm's result~\cite{schramm2007combinatorically}, which does not give an efficient way to compute such a representation. 

Another approach for proving Theorem~\ref{thm:homothetic-triangles} was proposed by Felsner~\cite{Felsner11}. The idea is to guess a Schnyder realizer, compute a contact triangle representation, and set up a system of linear equations whose variables are the side lengths of triangles. The system has a unique solution and if it is non-negative it gives homothetic triangles. If the solution contains negative entries then from these one can read off a new Schnyder realizer and iterate. In practice, this always produces a homothetic triangle representation. However, there is no formal proof that this iterative procedure terminates.

Felsner's approach can be directly translated into our setting with L-representations. Guessing a Schnyder realizer we obtain an L-representation and an equation system whose variables are the lengths of segments. It has a unique solution and if it is non-negative we obtain an SL-representation. We believe that our interpretation may help to find homothetic triangle representations and hence cube representations efficiently. For example, the solution of the new equation system can be seen as two flows $f_h$ and $f_v$ in the visibility graph $G_h$ of horizontal and $G_v$ of vertical segments, respectively. Both, $G_h$ and $G_v$ are planar graphs, there is a vertex of $G_h$ in every face of $G_v$ and vice versa, and every edge of $G_h$ is crossed by a corresponding edge of $G_v$. However, $G_h$, $G_v$ are not a primal-dual pair of graphs; see Fig.~\ref{fig:flow-graphs}. The edges of $G_h$ and $G_v$ correspond to the horizontal and vertical segments, respectively. The solution to the equation system corresponds to an $s_h-t_h$ flow in $G_h$ and at the same time to an $s_v-t_v$ flow in $G_v$. A variable is positive if the flow through the corresponding edge in $G_h$ and $G_v$ goes bottom-up and left-to-right, respectively. A similar approach works for squarings of rectangular duals~\cite{FelsnerSurvey}, where $G_v$, $G_h$ is indeed a primal-dual pair of graphs. Considerations similar to those in~\cite{FelsnerSurvey} may give more insight to the problem.

 \begin{figure}[t!]
  \centering
  \subfigure[]{
   \includegraphics{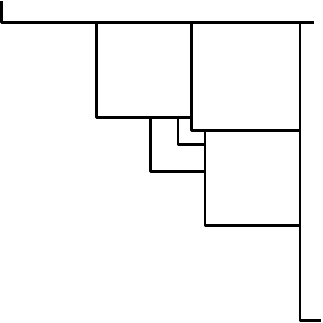}
   \label{fig:T-equilateral}
  }
  \subfigure[]{
   \includegraphics{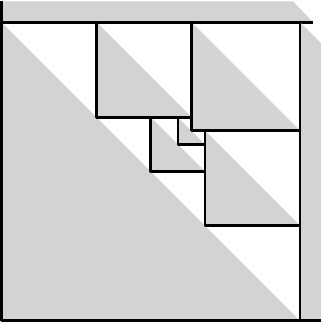}
   \label{fig:T-homothetics}
  }
  \subfigure[]{
   \includegraphics{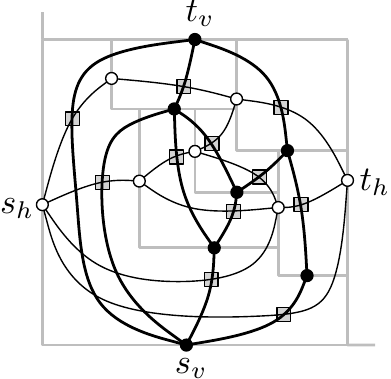}
   \label{fig:flow-graphs}
  }
  \caption{\subref{fig:T-equilateral} An SL-representation. \subref{fig:T-homothetics} The corresponding homothetic triangle representation. \subref{fig:flow-graphs} Graphs $G_h$, drawn thick on black vertices, and $G_v$, drawn thin on white vertices.
The gray boxes indicate which pairs of edges correspond to each other, i.e., the corresponding variables receive the same value in the $s_h-t_h$ flow in $G_h$ and the $s_v-t_v$ flow in $G_v$.}
  \label{fig:overconstrained}
 \end{figure}

\section{Conclusions and Open Problems}

We investigated L-graphs, provided a characterization, showed relations to Schnyder realizers and canonical orders, and described a recognition algorithm. Moreover, we showed that every L-representation can be transformed into an equivalent equilateral one, thus proving that every planar graph admits a proper contact representation with square-based cuboids, strengthening results by Thomassen~\cite{Thomassen86} and Bremner {\em et al.}~\cite{cubes12}. Finally, we showed that a more restrictive version of equilateral L-representations is equivalent to contact representations with homothetic triangles.
Many problems remain:
\begin{itemize}
 \item Characterizing contact L-graphs with L's in two or three rotations is open.
 \item Can L-graphs be recognized in linear time?
 \item Is there always an equilateral L-representation on a polynomial grid?
 \item Does every planar graph admit a {\em proper} contact representation with cubes?
 \item Can SL-representations help find homothetic triangle representations efficiently?
\end{itemize}

\medskip
\noindent{\bf\large Acknowledgments:} Initial work began at the Barbados Computational Geometry workshop in Feb.~2012, followed by work at the Berlin EuroGiga GraDR meeting in Nov.~2012. 
We thank organizers and participants for fruitful discussions and suggestions. We especially thank S.~Felsner, M.~Kaufmann, G.~Liotta, and T.~Mchedlidze for many discussions about several variants of this problem.

\bibliographystyle{abbrv}
{
\bibliography{laman}

\begin{thebibliography}{10}

\bibitem{A+11}
M.~J. Alam, T.~Biedl, S.~Felsner, M.~Kaufmann, and S.~Kobourov.
\newblock Proportional contact representations of planar graphs.
\newblock In {\em 19th Symposium on Graph Drawing}, pages 26--38, 2011.

\bibitem{Asinowski2012}
A.~Asinowski, E.~Cohen, M.~C. Golumbic, V.~Limouzy, M.~Lipshteyn, and M.~Stern.
\newblock Vertex intersection graphs of paths on a grid.
\newblock {\em J. Graph Algorithms Appl.}, 16(2):129--150, 2012.

\bibitem{cubes12}
D.~Bremner, W.~Evans, F.~Frati, L.~Heyer, S.~G. Kobourov, W.~Lenhart,
  G.~Liotta, D.~Rappaport, and S.~Whitesides.
\newblock On representing graphs by touching cuboids.
\newblock In {\em 20th Symposium on Graph Drawing}, pages 187--198. 2012.

\bibitem{Chalopin:2009}
J.~Chalopin and D.~Gon\c{c}alves.
\newblock Every planar graph is the intersection graph of segments in the
  plane.
\newblock In {\em 41st ACM Symp. on Theory of Computing (STOC)}, pages
  631--638, 2009.

\bibitem{ChaplickU12}
S.~Chaplick and T.~Ueckerdt.
\newblock Planar graphs as {VPG}-graphs.
\newblock In {\em 20th Symp.~on Graph Drawing}, pages 174--186, 2013.

\bibitem{CzyzowiczKU98}
J.~Czyzowicz, E.~Kranakis, and J.~Urrutia.
\newblock A simple proof of the representation of bipartite planar graphs as
  the contact graphs of orthogonal straight line segments.
\newblock {\em Information Processing Letters}, 66(3):125--126, 1998.

\bibitem{CastroCDMN02}
N.~de~Castro, F.~Cobos, J.~Dana, A.~M{\'a}rquez, and M.~Noy.
\newblock Triangle-free planar graphs and segment intersection graphs.
\newblock {\em J.~of Graph Algorithms and Applications}, 6(1):7--26, 2002.

\bibitem{fo-rcis-07}
H.~de~Fraysseix and P.~Ossona~de Mendez.
\newblock Representations by contact and intersection of segments.
\newblock {\em Algorithmica}, 47:453--463, 2007.

\bibitem{fop-rpgs-91}
H.~de~Fraysseix, P.~Ossona~de Mendez, and J.~Pach.
\newblock Representation of planar graphs by segments.
\newblock {\em Intuitive Geometry}, 63:109--117, 1991.

\bibitem{FraysseixTContact}
H.~de~Fraysseix, P.~Ossona~de Mendez, and P.~Rosenstiehl.
\newblock On triangle contact graphs.
\newblock {\em Combinatorics, Probability {\&} Computing}, 3:233--246, 1994.

\bibitem{fpp-hdpgg-90}
H.~de~Fraysseix, J.~Pach, and R.~Pollack.
\newblock How to draw a planar graph on a grid.
\newblock {\em Combinatorica}, 10(1):41--51, 1990.

\bibitem{f-lspg-04}
S.~Felsner.
\newblock Lattice structures from planar graphs.
\newblock {\em Electronic Journal of Combinatorics}, 11:R15, 2004.

\bibitem{FelsnerSurvey}
S.~Felsner.
\newblock Rectangle and square representations of planar graphs.
\newblock {\em Thirty Essays in Geometric Graph Theory, edited by J. Pach},
  2012.

\bibitem{Felsner11}
S.~Felsner and M.~C. Francis.
\newblock Contact representations of planar graphs with cubes.
\newblock In {\em Proc. 27th Symposium on Computational Geometry}, pages
  315--320, 2011.

\bibitem{Fusy09}
{\'E}.~Fusy.
\newblock Transversal structures on triangulations: A combinatorial study and
  straight-line drawings.
\newblock {\em Discrete Mathematics}, 309(7):1870--1894, 2009.

\bibitem{gonccalves2012triangle}
D.~Gon{\c{c}}alves, B.~L{\'e}v{\^e}que, and A.~Pinlou.
\newblock Triangle contact representations and duality.
\newblock {\em Discrete \& Computational Geometry}, pages 1--16, 2012.

\bibitem{hors+-pmrgpt-05}
R.~Haas, D.~Orden, G.~Rote, F.~Santos, B.~Servatius, H.~Servatius, D.~Souvaine,
  I.~Streinu, and W.~Whiteley.
\newblock Planar minimally rigid graphs and pseudo-triangulations.
\newblock {\em Computational Geometry}, 31:31--61, 2005.

\bibitem{Hopcroft1974}
J.~Hopcroft and R.~Tarjan.
\newblock Efficient planarity testing.
\newblock {\em J. ACM}, 21(4):549--568, Oct. 1974.

\bibitem{full}
S.~Kobourov, T.~Ueckerdt, and K.~Verbeek.
\newblock Combinatorial and geometric properties of planar {Laman} graphs.
\newblock In {\em 24th Symp.~on Discrete Algorithms (SODA)}, pages 1668--1678.
  2013.

\bibitem{Koebe36}
P.~Koebe.
\newblock {K}ontaktprobleme der konformen {A}bbildung.
\newblock {\em Berichte {\"u}ber die Verhandlungen der S{\"a}chsischen
  Akad.~der Wissenschaften zu Leipzig. Math.-Phys. Klasse}, 88:141--164, 1936.

\bibitem{Laman}
G.~Laman.
\newblock On graphs and rigidity of plane skeletal structures.
\newblock {\em Journal of Engineering Mathematics}, 4:331--340, 1970.

\bibitem{rt-rplbopg-86}
P.~Rosenstiehl and R.~Tarjan.
\newblock Rectilinear planar layouts and bipolar orientations of planar graphs.
\newblock {\em Discrete \& Computational Geometry}, 1:343--353, 1986.

\bibitem{s-epgg-90}
W.~Schnyder.
\newblock Embedding planar graphs on the grid.
\newblock In {\em Proc. 1st ACM-SIAM Symposium on Discrete Algorithms}, pages
  138--148, 1990.

\bibitem{schramm2007combinatorically}
O.~Schramm.
\newblock Combinatorically prescribed packings and applications to conformal
  and quasiconformal maps.
\newblock {\em arXiv preprint arXiv:0709.0710}, 2007.

\bibitem{Thomassen86}
C.~Thomassen.
\newblock Interval representations of planar graphs.
\newblock {\em J. Comb. Theory, Ser. B}, 40(1):9--20, 1986.

\end{thebibliography}
}

%
%

\end{document}